%%%%%%%%%%%%%%%%%%%%%%%%%%%%%%%%%%%%%%%%%%%%%%%%%%%%%%%%%%%%%%%%%%%%%%%%%%%%%%%%%%%%%%%%%%%%%%%%%%%%%%%%%

%%%%%% HEADER

\documentclass[12pt]{article}
\usepackage{eurosym}
\usepackage[leqno]{amsmath}
\usepackage[amsmath,thmmarks,thref,framed]{ntheorem}
\usepackage{xcolor}
\usepackage[colorinlistoftodos]{todonotes}
\usepackage{authblk}
\usepackage{ascii}
\usepackage{amsfonts}
\usepackage{amssymb}
\usepackage{amstext}
\usepackage{bbm}
\usepackage{calrsfs}
\usepackage{color}
\usepackage{dsfont}
\usepackage{epsfig}
\usepackage{keystroke}
\usepackage{latexsym}
\usepackage{lmodern}
\usepackage{psfrag}
\usepackage{textcomp}
\usepackage{units}
\usepackage{url}
\usepackage{yfonts}
\usepackage{hyperref}
\usepackage{times}
\usepackage{anysize}

\hypersetup{colorlinks=true}
\hypersetup{urlcolor=black}
\hypersetup{linkcolor=blue}
\hypersetup{citecolor=blue}

\definecolor{black}{cmyk}{1,1,1,1}
\definecolor{blue}{cmyk}{1,1,0,0}

% for numbering equations and theorems with the same numbers
% for breaking the margin use \theoremstyle{marginbreak}

%for theorems
\theoremstyle{break}
\theoremheaderfont{\small\normalfont\sffamily}
\theorembodyfont{\itshape}
\theoremsymbol{\hfill\EOT}
\theoremseparator{:}
\newtheorem{theorem}{Theorem}
\theoremclass{Theorem}
% for conjectures
\theoremstyle{plain}
\theoremheaderfont{\sc}
\theoremsymbol{\hfill\EOT}
\theoremseparator{.}
\newshadedtheorem{conjecture}{Conjecture}
% for lemmas
\theoremstyle{break}
\theorembodyfont{\slshape}
\theoremsymbol{\hfill\EOT}
\newtheorem{lemma}{Lemma}
%for corollaries
\theoremindent0cm
\theoremsymbol{\hfill\LF}
\theoremnumbering{arabic}
\newtheorem{proposition}{Proposition}
\theoremindent0cm
\theorembodyfont{\slshape}
\theoremsymbol{\hfill\LF}
\theoremnumbering{arabic}
\newtheorem{corollary}{Corollary}
\newshadedtheorem{corshaded}{Corollary}
%for remark
\theoremstyle{plain}
\theoremheaderfont{\bfseries}
\theorembodyfont{\small\fontfamily{cmbr}\selectfont}
\theoremsymbol{\hfill\SI}
\theoremseparator{.}
\newtheorem{remark}{Remark}
%for remarks
\theoremstyle{break}
\theoremheaderfont{\bfseries}
\theorembodyfont{\small\fontfamily{cmbr}\selectfont}
\theoremsymbol{\hfill\SI}
\theoremseparator{.}

%for examples
\theoremstyle{break}
\theoremheaderfont{\ttfamily}
\theorembodyfont{\small\sffamily}
\theoremsymbol{\ensuremath{\ast}}
\theoremseparator{.}
\newtheorem{example}{Example}
%for definition
\theoremheaderfont{\itshape}
\theorembodyfont{\normalfont}
\theoremsymbol{\hfill\BS}
\theoremseparator{.}
% \theoremprework{\bigskip\hrule}
% \theorempostwork{\hrule\bigskip}
\newtheorem{definition}{Definition}
% for notation
\theoremstyle{break}
\theoremheaderfont{\normalfont\ttfamily}
\theorembodyfont{\small\fontfamily{cmbr}\selectfont}
\theoremsymbol{\hfill\ENQ}
\newtheorem{notation}{Notation}
% for condition
\theoremheaderfont{\normalfont\ttfamily}
\theorembodyfont{\small\fontfamily{cmbr}\selectfont}
\theoremsymbol{\hfill\ENQ}

% for assumptions
\theoremheaderfont{\normalfont\ttfamily}
\theorembodyfont{\small\fontfamily{cmbr}\selectfont}
\theoremsymbol{\hfill\ENQ}

%for proofs
\theoremheaderfont{\itshape}
\theorembodyfont{\upshape}
\theoremstyle{nonumberplain}
\theoremseparator{.}
\theoremsymbol{\End}
\newtheorem{proof}{Proof}
\def\1{\mathbf 1}
\def\a{\Gamma}

\def\B{\mathrm B}
\def\BL{\mathcal B}

\def\C{C^{*}}

\def\CAR{\mathrm{CAR}}
\def\CP{\mathbb{C}}

\def\eqt#1{\texorpdfstring{#1}{}}

\def\fp{\mathfrak{p}(\H,\a)}

\def\H{\mathcal H}
\def\h{\mathfrak h}

\def\I{\mathrm I}

\def\inner#1{\left< #1 \right>}

\def\N{\mathbb N}

\def\P{\mathrm{P}}

\def\sCAR{\mathrm{sCAR}(\H,\a)}

\def\states{\mathfrak E}

\def\tr{\mathrm{tr}}

\def\X{\mathcal X}

\def\Z{\mathbb Z}

% Macros for Scientific Word 2.5 documents saved with the LaTeX filter.
%Copyright (C) 1994-95 TCI Software Research, Inc.
\typeout{TCILATEX Macros for Scientific Word 2.5 <22 Dec 95>.}
\typeout{NOTICE:  This macro file is NOT proprietary and may be 
freely copied and distributed.}
\makeatletter
%
%%%%%%%%%%%%%%%%%%%%%%
% macros for time
\newcount\@hour\newcount\@minute\chardef\@x10\chardef\@xv60
\def\tcitime{
\def\@time{%
  \@minute\time\@hour\@minute\divide\@hour\@xv
  \ifnum\@hour<\@x 0\fi\the\@hour:%
  \multiply\@hour\@xv\advance\@minute-\@hour
  \ifnum\@minute<\@x 0\fi\the\@minute
  }}%

%%%%%%%%%%%%%%%%%%%%%%
% macro for hyperref
\@ifundefined{hyperref}{}{}

% macro for external program call
\@ifundefined{qExtProgCall}{\def\qExtProgCall#1#2#3#4#5#6{\relax}}{}
%%%%%%%%%%%%%%%%%%%%%%
%
% macros for graphics
%
%
%
\def\QCTOpt[#1]#2{%
  \def\QCTOptB{#1}
  \def\QCTOptA{#2}
}
\def\QCTNOpt#1{%
  \def\QCTOptA{#1}
  \let\QCTOptB\empty
}
\def\Qct{%
  \@ifnextchar[{%
    \QCTOpt}{\QCTNOpt}
}
\def\QCBOpt[#1]#2{%
  \def\QCBOptB{#1}
  \def\QCBOptA{#2}
}
\def\QCBNOpt#1{%
  \def\QCBOptA{#1}
  \let\QCBOptB\empty
}
\def\Qcb{%
  \@ifnextchar[{%
    \QCBOpt}{\QCBNOpt}
}
\def\PrepCapArgs{%
  \ifx\QCBOptA\empty
    \ifx\QCTOptA\empty
      {}%
    \else
      \ifx\QCTOptB\empty
        {\QCTOptA}%
      \else
        [\QCTOptB]{\QCTOptA}%
      \fi
    \fi
  \else
    \ifx\QCBOptA\empty
      {}%
    \else
      \ifx\QCBOptB\empty
        {\QCBOptA}%
      \else
        [\QCBOptB]{\QCBOptA}%
      \fi
    \fi
  \fi
}
\newcount\GRAPHICSTYPE
%\GRAPHICSTYPE 0 is for TurboTeX
%\GRAPHICSTYPE 1 is for DVIWindo (PostScript)
%%%(removed)%\GRAPHICSTYPE 2 is for psfig (PostScript)
\GRAPHICSTYPE=\z@
\def\GRAPHICSPS#1{%
 \ifcase\GRAPHICSTYPE%\GRAPHICSTYPE=0
   \special{ps: #1}%
 \or%\GRAPHICSTYPE=1
   \special{language "PS", include "#1"}%
%%%\or%\GRAPHICSTYPE=2
%%%  #1%
 \fi
}%
%
%
%
% \graffile{ body }                                  %#1
%          { contentswidth (scalar)  }               %#2
%          { contentsheight (scalar) }               %#3
%          { vertical shift when in-line (scalar) }  %#4
\def\graffile#1#2#3#4{%
%%% \ifnum\GRAPHICSTYPE=\tw@
%%%  %Following if using psfig
%%%  \@ifundefined{psfig}{\input psfig.tex}{}%
%%%  \psfig{file=#1, height=#3, width=#2}%
%%% \else
  %Following for all others
  % JCS - added BOXTHEFRAME, see below
    \leavevmode
    \raise -#4 \BOXTHEFRAME{%
        \hbox to #2{\raise #3\hbox to #2{\null #1\hfil}}}%
}%
%
% A box for drafts
\def\draftbox#1#2#3#4{%
 \leavevmode\raise -#4 \hbox{%
  \frame{\rlap{\protect\tiny #1}\hbox to #2%
   {\vrule height#3 width\z@ depth\z@\hfil}%
  }%
 }%
}%
\newcount\draft
\draft=\z@

\newif\ifwasdraft
\wasdraftfalse

%  \GRAPHIC{ body }                                  %#1
%          { draft name }                            %#2
%          { contentswidth (scalar)  }               %#3
%          { contentsheight (scalar) }               %#4
%          { vertical shift when in-line (scalar) }  %#5
\def\GRAPHIC#1#2#3#4#5{%
 \ifnum\draft=\@ne\draftbox{#2}{#3}{#4}{#5}%
  \else\graffile{#1}{#3}{#4}{#5}%
  \fi
 }%
\def\addtoLaTeXparams#1{%
    \edef\LaTeXparams{\LaTeXparams #1}}%
%
% JCS -  added a switch BoxFrame that can 
% be set by including X in the frame params.
% If set a box is drawn around the frame.

\newif\ifBoxFrame \BoxFramefalse
\newif\ifOverFrame \OverFramefalse
\newif\ifUnderFrame \UnderFramefalse

\def\BOXTHEFRAME#1{%
   \hbox{%
      \ifBoxFrame
         \frame{#1}%
      \else
         {#1}%
      \fi
   }%
}

\def\doFRAMEparams#1{\BoxFramefalse\OverFramefalse\UnderFramefalse\readFRAMEparams#1\end}%
\def\readFRAMEparams#1{%
 \ifx#1\end%
  \let\next=\relax
  \else
  \ifx#1i\dispkind=\z@\fi
  \ifx#1d\dispkind=\@ne\fi
  \ifx#1f\dispkind=\tw@\fi
  \ifx#1t\addtoLaTeXparams{t}\fi
  \ifx#1b\addtoLaTeXparams{b}\fi
  \ifx#1p\addtoLaTeXparams{p}\fi
  \ifx#1h\addtoLaTeXparams{h}\fi
  \ifx#1X\BoxFrametrue\fi
  \ifx#1O\OverFrametrue\fi
  \ifx#1U\UnderFrametrue\fi
  \ifx#1w
    \ifnum\draft=1\wasdrafttrue\else\wasdraftfalse\fi
    \draft=\@ne
  \fi
  \let\next=\readFRAMEparams
  \fi
 \next
 }%
%
%Macro for In-line graphics object
%   \IFRAME{ contentswidth (scalar)  }               %#1
%          { contentsheight (scalar) }               %#2
%          { vertical shift when in-line (scalar) }  %#3
%          { draft name }                            %#4
%          { body }                                  %#5
%          { caption}                                %#6

\def\IFRAME#1#2#3#4#5#6{%
      \bgroup
      \let\QCTOptA\empty
      \let\QCTOptB\empty
      \let\QCBOptA\empty
      \let\QCBOptB\empty
      #6%
      \parindent=0pt%
      \leftskip=0pt
      \rightskip=0pt
      \setbox0 = \hbox{\QCBOptA}%
      \@tempdima = #1\relax
      \ifOverFrame
          % Do this later
          \typeout{This is not implemented yet}%
          \show\HELP
      \else
         \ifdim\wd0>\@tempdima
            \advance\@tempdima by \@tempdima
            \ifdim\wd0 >\@tempdima
               \textwidth=\@tempdima
               \setbox1 =\vbox{%
                  \noindent\hbox to \@tempdima{\hfill\GRAPHIC{#5}{#4}{#1}{#2}{#3}\hfill}\\%
                  \noindent\hbox to \@tempdima{\parbox[b]{\@tempdima}{\QCBOptA}}%
               }%
               \wd1=\@tempdima
            \else
               \textwidth=\wd0
               \setbox1 =\vbox{%
                 \noindent\hbox to \wd0{\hfill\GRAPHIC{#5}{#4}{#1}{#2}{#3}\hfill}\\%
                 \noindent\hbox{\QCBOptA}%
               }%
               \wd1=\wd0
            \fi
         \else
            %\show\BBB
            \ifdim\wd0>0pt
              \hsize=\@tempdima
              \setbox1 =\vbox{%
                \unskip\GRAPHIC{#5}{#4}{#1}{#2}{0pt}%
                \break
                \unskip\hbox to \@tempdima{\hfill \QCBOptA\hfill}%
              }%
              \wd1=\@tempdima
           \else
              \hsize=\@tempdima
              \setbox1 =\vbox{%
                \unskip\GRAPHIC{#5}{#4}{#1}{#2}{0pt}%
              }%
              \wd1=\@tempdima
           \fi
         \fi
         \@tempdimb=\ht1
         \advance\@tempdimb by \dp1
         \advance\@tempdimb by -#2%
         \advance\@tempdimb by #3%
         \leavevmode
         \raise -\@tempdimb \hbox{\box1}%
      \fi
      \egroup%
}%
%
%Macro for Display graphics object
%   \DFRAME{ contentswidth (scalar)  }               %#1
%          { contentsheight (scalar) }               %#2
%          { draft label }                           %#3
%          { name }                                  %#4
%          { caption}                                %#5
\def\DFRAME#1#2#3#4#5{%
 \begin{center}
     \let\QCTOptA\empty
     \let\QCTOptB\empty
     \let\QCBOptA\empty
     \let\QCBOptB\empty
     \ifOverFrame 
        #5\QCTOptA\par
     \fi
     \GRAPHIC{#4}{#3}{#1}{#2}{\z@}
     \ifUnderFrame 
        \nobreak\par #5\QCBOptA
     \fi
 \end{center}%
 }%
%
%Macro for Floating graphic object
%   \FFRAME{ framedata f|i tbph x F|T }              %#1
%          { contentswidth (scalar)  }               %#2
%          { contentsheight (scalar) }               %#3
%          { caption }                               %#4
%          { label }                                 %#5
%          { draft name }                            %#6
%          { body }                                  %#7
\def\FFRAME#1#2#3#4#5#6#7{%
 \begin{figure}[#1]%
  \let\QCTOptA\empty
  \let\QCTOptB\empty
  \let\QCBOptA\empty
  \let\QCBOptB\empty
  \ifOverFrame
    #4
    \ifx\QCTOptA\empty
    \else
      \ifx\QCTOptB\empty
        \caption{\QCTOptA}%
      \else
        \caption[\QCTOptB]{\QCTOptA}%
      \fi
    \fi
    \ifUnderFrame\else
      \label{#5}%
    \fi
  \else
    \UnderFrametrue%
  \fi
  \begin{center}\GRAPHIC{#7}{#6}{#2}{#3}{\z@}\end{center}%
  \ifUnderFrame
    #4
    \ifx\QCBOptA\empty
      \caption{}%
    \else
      \ifx\QCBOptB\empty
        \caption{\QCBOptA}%
      \else
        \caption[\QCBOptB]{\QCBOptA}%
      \fi
    \fi
    \label{#5}%
  \fi
  \end{figure}%
 }%
%
%
%    \FRAME{ framedata f|i tbph x F|T }              %#1
%          { contentswidth (scalar)  }               %#2
%          { contentsheight (scalar) }               %#3
%          { vertical shift when in-line (scalar) }  %#4
%          { caption }                               %#5
%          { label }                                 %#6
%          { name }                                  %#7
%          { body }                                  %#8
%
%    framedata is a string which can contain the following
%    characters: idftbphxFT
%    Their meaning is as follows:
%             i, d or f : in-line, display, or floating
%             t,b,p,h   : LaTeX floating placement options
%             x         : fit contents box to contents
%             F or T    : Figure or Table. 
%                         Later this can expand
%                         to a more general float class.
%
%
\newcount\dispkind%

\def\makeactives{
  \catcode`\"=\active
  \catcode`\;=\active
  \catcode`\:=\active
  \catcode`\'=\active
  \catcode`\~=\active
}
\bgroup
   \makeactives
   \gdef\activesoff{%
      \def"{\string"}
      \def;{\string;}
      \def:{\string:}
      \def'{\string'}
      \def~{\string~}
      %\bbl@deactivate{"}%
      %\bbl@deactivate{;}%
      %\bbl@deactivate{:}%
      %\bbl@deactivate{'}%
    }
\egroup

\def\FRAME#1#2#3#4#5#6#7#8{%
 \bgroup
 \@ifundefined{bbl@deactivate}{}{\activesoff}
 \ifnum\draft=\@ne
   \wasdrafttrue
 \else
   \wasdraftfalse%
 \fi
 \def\LaTeXparams{}%
 \dispkind=\z@
 \def\LaTeXparams{}%
 \doFRAMEparams{#1}%
 \ifnum\dispkind=\z@\IFRAME{#2}{#3}{#4}{#7}{#8}{#5}\else
  \ifnum\dispkind=\@ne\DFRAME{#2}{#3}{#7}{#8}{#5}\else
   \ifnum\dispkind=\tw@
    \edef\@tempa{\noexpand\FFRAME{\LaTeXparams}}%
    \@tempa{#2}{#3}{#5}{#6}{#7}{#8}%
    \fi
   \fi
  \fi
  \ifwasdraft\draft=1\else\draft=0\fi{}%
  \egroup
 }%
%
% This macro added to let SW gobble a parameter that
% should not be passed on and expanded. 

\def\TEXUX#1{"texux"}

%
% Macros for text attributes:
%
%
%
%
%%%%%%%%%%%%%%%%%%%%%%%%%%%%%%%%%%%%%%%%%%%%%%%%%%%%%%%%%%%%%%%%%%%%%%%%
%
%
% macros for user - defined functions
%
%

%
% miscellaneous 
%\long\def\QQQ#1#2{}%
\long\def\QQQ#1#2{%
     \long\expandafter\def\csname#1\endcsname{#2}}%
\@ifundefined{QTP}{\def\QTP#1{}}{}
\@ifundefined{QEXCLUDE}{\def\QEXCLUDE#1{}}{}
%\@ifundefined{Qcb}{\def\Qcb#1{#1}}{}
%\@ifundefined{Qct}{\def\Qct#1{#1}}{}
\@ifundefined{Qlb}{}{}
\@ifundefined{Qlt}{}{}
\long\def\QQA#1#2{}%
\def\QTR#1#2{{\csname#1\endcsname #2}}%(gp) Is this the best?
\def\EXPAND#1[#2]#3{}%
\def\NOEXPAND#1[#2]#3{}%
\def\LaTeXparent#1{}%
\def\ChildStyles#1{}%
\def\ChildDefaults#1{}%
\def\QTagDef#1#2#3{}%
%
% Macros for style editor docs
\@ifundefined{StyleEditBeginDoc}{}{}
%
% Macros for footnotes
\def\QQfnmark#1{\footnotemark}

%
% Macros for indexing.
\def\makeatletter\input gnuindex.sty\makeatother\makeindex{\makeatletter\input gnuindex.sty\makeatother\makeindex}%	
\@ifundefined{INDEX}{\def\INDEX#1#2{}{}}{}%
\@ifundefined{SUBINDEX}{\def\SUBINDEX#1#2#3{}{}{}}{}%
\@ifundefined{initial}%  
   {\def\initial#1{\bigbreak{\raggedright\large\bf #1}\kern 2\p@\penalty3000}}%
   {}%
\@ifundefined{entry}{}{}%
\@ifundefined{primary}{}{}%
\@ifundefined{secondary}{}{}%
\@ifundefined{ZZZ}{}{\makeatletter\input gnuindex.sty\makeatother\makeindex\makeatletter}%
%
% Attempts to avoid problems with other styles
\@ifundefined{abstract}{%
 \def\abstract{%
  \if@twocolumn
   \section*{Abstract (Not appropriate in this style!)}%
   \else \small 
   \begin{center}{\bf Abstract\vspace{-.5em}\vspace{\z@}}\end{center}%
   \quotation 
   \fi
  }%
 }{%
 }%
\@ifundefined{endabstract}{\def\endabstract
  {\if@twocolumn\else\endquotation\fi}}{}%
\@ifundefined{maketitle}{\def\maketitle#1{}}{}%
\@ifundefined{affiliation}{\def\affiliation#1{}}{}%
\@ifundefined{proof}{}{}%
\@ifundefined{endproof}{}{}%
\@ifundefined{newfield}{\def\newfield#1#2{}}{}%
\@ifundefined{chapter}{\def\chapter#1{\par(Chapter head:)#1\par }%
 \newcount\c@chapter}{}%
\@ifundefined{part}{\def\part#1{\par(Part head:)#1\par }}{}%
\@ifundefined{section}{\def\section#1{\par(Section head:)#1\par }}{}%
\@ifundefined{subsection}{\def\subsection#1%
 {\par(Subsection head:)#1\par }}{}%
\@ifundefined{subsubsection}{\def\subsubsection#1%
 {\par(Subsubsection head:)#1\par }}{}%
\@ifundefined{paragraph}{\def\paragraph#1%
 {\par(Subsubsubsection head:)#1\par }}{}%
\@ifundefined{subparagraph}{\def\subparagraph#1%
 {\par(Subsubsubsubsection head:)#1\par }}{}%
%%%%%%%%%%%%%%%%%%%%%%%%%%%%%%%%%%%%%%%%%%%%%%%%%%%%%%%%%%%%%%%%%%%%%%%%
% These symbols are not recognized by LaTeX
\@ifundefined{therefore}{}{}%
\@ifundefined{backepsilon}{}{}%
\@ifundefined{yen}{}{}%
\@ifundefined{registered}{%
   \def\registered{\relax\ifmmode{}\r@gistered
                    \else$\m@th\r@gistered$\fi}%
 \def\r@gistered{^{\ooalign
  {\hfil\raise.07ex\hbox{$\scriptstyle\rm\text{R}$}\hfil\crcr
  \mathhexbox20D}}}}{}%
\@ifundefined{Eth}{}{}%
\@ifundefined{eth}{}{}%
\@ifundefined{Thorn}{}{}%
\@ifundefined{thorn}{}{}%
% A macro to allow any symbol that requires math to appear in text
%
\@ifundefined{degree}{}{}%
%
% macros for T3TeX files
\newdimen\theight
\def\Column{%
 \vadjust{\setbox\z@=\hbox{\scriptsize\quad\quad tcol}%
  \theight=\ht\z@\advance\theight by \dp\z@\advance\theight by \lineskip
  \kern -\theight \vbox to \theight{%
   \rightline{\rlap{\box\z@}}%
   \vss
   }%
  }%
 }%
\def\qed{%
 \ifhmode\unskip\nobreak\fi\ifmmode\ifinner\else\hskip5\p@\fi\fi
 \hbox{\hskip5\p@\vrule width4\p@ height6\p@ depth1.5\p@\hskip\p@}%
 }%
\def\miss{\hbox{\vrule height2\p@ width 2\p@ depth\z@}}%
%\def\miss{\hbox{.}}%        %another possibility 
%
%           %always translated to \left| or \right|
%
\def\tcol#1{{\baselineskip=6\p@ \vcenter{#1}} \Column}  %
%
%                 %dummy entry in column 
%             %column entry
%               %column entry (not math)
%
%\newcount\notenumber
%\def\clearnotenumber{\notenumber=0}
%\def\note{\global\advance\notenumber by 1
% \footnote{$^{\the\notenumber}$}}
%\def\note{\global\advance\notenumber by 1
%
%
%

\def\newfmtname{LaTeX2e}
\def\chkcompat{%
   \if@compatibility
   \else
     \usepackage{latexsym}
   \fi
}

\ifx\fmtname\newfmtname
  \DeclareOldFontCommand{\rm}{\normalfont\rmfamily}{\mathrm}
  \DeclareOldFontCommand{\sf}{\normalfont\sffamily}{\mathsf}
  \DeclareOldFontCommand{\tt}{\normalfont\ttfamily}{\mathtt}
  \DeclareOldFontCommand{\bf}{\normalfont\bfseries}{\mathbf}
  \DeclareOldFontCommand{\it}{\normalfont\itshape}{\mathit}
  \DeclareOldFontCommand{\sl}{\normalfont\slshape}{\@nomath\sl}
  \DeclareOldFontCommand{\sc}{\normalfont\scshape}{\@nomath\sc}
  \chkcompat
\fi

%
% Greek bold macros
% Redefine all of the math symbols 
% which might be bolded	 - there are 
% probably others to add to this list

\def\alpha{{\Greekmath 010B}}%
\def\beta{{\Greekmath 010C}}%
\def\gamma{{\Greekmath 010D}}%
\def\delta{{\Greekmath 010E}}%
\def\epsilon{{\Greekmath 010F}}%
\def\zeta{{\Greekmath 0110}}%
\def\eta{{\Greekmath 0111}}%
\def\theta{{\Greekmath 0112}}%
\def\iota{{\Greekmath 0113}}%
\def\kappa{{\Greekmath 0114}}%
\def\lambda{{\Greekmath 0115}}%
\def\mu{{\Greekmath 0116}}%
\def\nu{{\Greekmath 0117}}%
\def\xi{{\Greekmath 0118}}%
\def\pi{{\Greekmath 0119}}%
\def\rho{{\Greekmath 011A}}%
\def\sigma{{\Greekmath 011B}}%
\def\tau{{\Greekmath 011C}}%
\def\upsilon{{\Greekmath 011D}}%
\def\phi{{\Greekmath 011E}}%
\def\chi{{\Greekmath 011F}}%
\def\psi{{\Greekmath 0120}}%
\def\omega{{\Greekmath 0121}}%
\def\varepsilon{{\Greekmath 0122}}%
\def\vartheta{{\Greekmath 0123}}%
\def\varpi{{\Greekmath 0124}}%
\def\varrho{{\Greekmath 0125}}%
\def\varsigma{{\Greekmath 0126}}%
\def\varphi{{\Greekmath 0127}}%

\def\nabla{{\Greekmath 0272}}
\def\FindBoldGroup{%
   {\setbox0=\hbox{$\mathbf{x\global\edef\theboldgroup{\the\mathgroup}}$}}%
}

\def\Greekmath#1#2#3#4{%
    \if@compatibility
        \ifnum\mathgroup=\symbold
           \mathchoice{\mbox{\boldmath$\displaystyle\mathchar"#1#2#3#4$}}%
                      {\mbox{\boldmath$\textstyle\mathchar"#1#2#3#4$}}%
                      {\mbox{\boldmath$\scriptstyle\mathchar"#1#2#3#4$}}%
                      {\mbox{\boldmath$\scriptscriptstyle\mathchar"#1#2#3#4$}}%
        \else
           \mathchar"#1#2#3#4% 
        \fi 
    \else 
        \FindBoldGroup
        \ifnum\mathgroup=\theboldgroup % For 2e
           \mathchoice{\mbox{\boldmath$\displaystyle\mathchar"#1#2#3#4$}}%
                      {\mbox{\boldmath$\textstyle\mathchar"#1#2#3#4$}}%
                      {\mbox{\boldmath$\scriptstyle\mathchar"#1#2#3#4$}}%
                      {\mbox{\boldmath$\scriptscriptstyle\mathchar"#1#2#3#4$}}%
        \else
           \mathchar"#1#2#3#4% 
        \fi     	    
	  \fi}

\newif\ifGreekBold  \GreekBoldfalse
\let\SAVEPBF=\pbf
\def\pbf{\GreekBoldtrue\SAVEPBF}%

\@ifundefined{theorem}{\newtheorem{theorem}{Theorem}}{}
\@ifundefined{lemma}{\newtheorem{lemma}[theorem]{Lemma}}{}
\@ifundefined{corollary}{}{}
\@ifundefined{conjecture}{}{}
\@ifundefined{proposition}{}{}
\@ifundefined{axiom}{}{}
\@ifundefined{remark}{}{}
\@ifundefined{example}{}{}
\@ifundefined{exercise}{}{}
\@ifundefined{definition}{\newtheorem{definition}{Definition}}{}

\@ifundefined{mathletters}{%
  \newcounter{equationnumber}  
  \def\mathletters{%
     \addtocounter{equation}{1}
     \edef\@currentlabel{\theequation}%
     \setcounter{equationnumber}{\c@equation}
     \setcounter{equation}{0}%
     \edef\theequation{\@currentlabel\noexpand\alph{equation}}%
  }
  
}{}

%Logos
\@ifundefined{BibTeX}{%
    \def\BibTeX{{\rm B\kern-.05em{\sc i\kern-.025em b}\kern-.08em
                 T\kern-.1667em\lower.7ex\hbox{E}\kern-.125emX}}}{}%
\@ifundefined{AmS}%
    {\def\AmS{{\protect\usefont{OMS}{cmsy}{m}{n}%
                A\kern-.1667em\lower.5ex\hbox{M}\kern-.125emS}}}{}%
\@ifundefined{AmSTeX}{}{}%
%

%%%%%%%%%%%%%%%%%%%%%%%%%%%%%%%%%%%%%%%%%%%%%%%%%%%%%%%%%%%%%%%%%%%%%%%
% NOTE: The rest of this file is read only if amstex has not been
% loaded.  This section is used to define amstex constructs in the
% event they have not been defined.
%
%
\ifx\ds@amstex\relax
   \message{amstex already loaded}\makeatother % 2.09 compatability
\else
   \@ifpackageloaded{amstex}%
      {\message{amstex already loaded}\makeatother }
      {}
   \@ifpackageloaded{amsgen}%
      {\message{amsgen already loaded}\makeatother }
      {}
\fi
%%%%%%%%%%%%%%%%%%%%%%%%%%%%%%%%%%%%%%%%%%%%%%%%%%%%%%%%%%%%%%%%%%%%%%%%
%%
%
%
%  Macros to define some AMS LaTeX constructs when 
%  AMS LaTeX has not been loaded
% 
% These macros are copied from the AMS-TeX package for doing
% multiple integrals.
%
\let\DOTSI\relax
\def\RIfM@{\relax\ifmmode}%
\def\FN@{\futurelet\next}%
\newcount\intno@
\def\iint{\DOTSI\intno@\tw@\FN@\ints@}%
\def\iiint{\DOTSI\intno@\thr@@\FN@\ints@}%
\def\iiiint{\DOTSI\intno@4 \FN@\ints@}%
\def\idotsint{\DOTSI\intno@\z@\FN@\ints@}%
\def\ints@{\findlimits@\ints@@}%
\newif\iflimtoken@
\newif\iflimits@
\def\findlimits@{\limtoken@true\ifx\next\limits\limits@true
 \else\ifx\next\nolimits\limits@false\else
 \limtoken@false\ifx\ilimits@\nolimits\limits@false\else
 \ifinner\limits@false\else\limits@true\fi\fi\fi\fi}%
\def\multint@{\int\ifnum\intno@=\z@\intdots@                          %1
 \else\intkern@\fi                                                    %2
 \ifnum\intno@>\tw@\int\intkern@\fi                                   %3
 \ifnum\intno@>\thr@@\int\intkern@\fi                                 %4
 \int}%                                                               %5
\def\multintlimits@{\intop\ifnum\intno@=\z@\intdots@\else\intkern@\fi
 \ifnum\intno@>\tw@\intop\intkern@\fi
 \ifnum\intno@>\thr@@\intop\intkern@\fi\intop}%
\def\intic@{%
    \mathchoice{\hskip.5em}{\hskip.4em}{\hskip.4em}{\hskip.4em}}%
\def\negintic@{\mathchoice
 {\hskip-.5em}{\hskip-.4em}{\hskip-.4em}{\hskip-.4em}}%
\def\ints@@{\iflimtoken@                                              %1
 \def\ints@@@{\iflimits@\negintic@
   \mathop{\intic@\multintlimits@}\limits                             %2
  \else\multint@\nolimits\fi                                          %3
  \eat@}%                                                             %4
 \else                                                                %5
 \def\ints@@@{\iflimits@\negintic@
  \mathop{\intic@\multintlimits@}\limits\else
  \multint@\nolimits\fi}\fi\ints@@@}%
\def\intkern@{\mathchoice{\!\!\!}{\!\!}{\!\!}{\!\!}}%
\def\plaincdots@{\mathinner{\cdotp\cdotp\cdotp}}%
\def\intdots@{\mathchoice{\plaincdots@}%
 {{\cdotp}\mkern1.5mu{\cdotp}\mkern1.5mu{\cdotp}}%
 {{\cdotp}\mkern1mu{\cdotp}\mkern1mu{\cdotp}}%
 {{\cdotp}\mkern1mu{\cdotp}\mkern1mu{\cdotp}}}%
%
%
%  These macros are for doing the AMS \text{} construct
%
\def\RIfM@{\relax\protect\ifmmode}
\def\text{\RIfM@\expandafter\text@\else\expandafter\mbox\fi}
\let\nfss@text\text
\def\text@#1{\mathchoice
   {\textdef@\displaystyle\f@size{#1}}%
   {\textdef@\textstyle\tf@size{\firstchoice@false #1}}%
   {\textdef@\textstyle\sf@size{\firstchoice@false #1}}%
   {\textdef@\textstyle \ssf@size{\firstchoice@false #1}}%
   \glb@settings}

\def\textdef@#1#2#3{\hbox{{%
                    \everymath{#1}%
                    \let\f@size#2\selectfont
                    #3}}}
\newif\iffirstchoice@
\firstchoice@true
%
%    Old Scheme for \text
%
%\def\rmfam{\z@}%
%\newif\iffirstchoice@
%\firstchoice@true
%\def\textfonti{\the\textfont\@ne}%
%\def\textfontii{\the\textfont\tw@}%
%\def\text{\RIfM@\expandafter\text@\else\expandafter\text@@\fi}%
%\def\text@@#1{\leavevmode\hbox{#1}}%
%\def\text@#1{\mathchoice
% {\hbox{\everymath{\displaystyle}\def\textfonti{\the\textfont\@ne}%
%  \def\textfontii{\the\textfont\tw@}\textdef@@ T#1}}%
% {\hbox{\firstchoice@false
%  \everymath{\textstyle}\def\textfonti{\the\textfont\@ne}%
%  \def\textfontii{\the\textfont\tw@}\textdef@@ T#1}}%
% {\hbox{\firstchoice@false
%  \everymath{\scriptstyle}\def\textfonti{\the\scriptfont\@ne}%
%  \def\textfontii{\the\scriptfont\tw@}\textdef@@ S\rm#1}}%
% {\hbox{\firstchoice@false
%  \everymath{\scriptscriptstyle}\def\textfonti
%  {\the\scriptscriptfont\@ne}%
%  \def\textfontii{\the\scriptscriptfont\tw@}\textdef@@ s\rm#1}}}%
%\def\textdef@@#1{\textdef@#1\rm\textdef@#1\bf\textdef@#1\sl
%    \textdef@#1\it}%
%\def\DN@{\def\next@}%
%\def\eat@#1{}%
%\def\textdef@#1#2{%
% \DN@{\csname\expandafter\eat@\string#2fam\endcsname}%
% \if S#1\edef#2{\the\scriptfont\next@\relax}%
% \else\if s#1\edef#2{\the\scriptscriptfont\next@\relax}%
% \else\edef#2{\the\textfont\next@\relax}\fi\fi}%
%
%
%These are the AMS constructs for multiline limits.
%
\def\Let@{\relax\iffalse{\fi\let\\=\cr\iffalse}\fi}%
\def\vspace@{\def\vspace##1{\crcr\noalign{\vskip##1\relax}}}%
\def\multilimits@{\bgroup\vspace@\Let@
 \baselineskip\fontdimen10 \scriptfont\tw@
 \advance\baselineskip\fontdimen12 \scriptfont\tw@
 \lineskip\thr@@\fontdimen8 \scriptfont\thr@@
 \lineskiplimit\lineskip
 \vbox\bgroup\ialign\bgroup\hfil$\m@th\scriptstyle{##}$\hfil\crcr}%
\def\Sb{_\multilimits@}%
\def\endSb{\crcr\egroup\egroup\egroup}%
\def\Sp{^\multilimits@}%

%
%
%These are AMS constructs for horizontal arrows
%
\newdimen\ex@
\ex@.2326ex
\def\rightarrowfill@#1{$#1\m@th\mathord-\mkern-6mu\cleaders
 \hbox{$#1\mkern-2mu\mathord-\mkern-2mu$}\hfill
 \mkern-6mu\mathord\rightarrow$}%
\def\leftarrowfill@#1{$#1\m@th\mathord\leftarrow\mkern-6mu\cleaders
 \hbox{$#1\mkern-2mu\mathord-\mkern-2mu$}\hfill\mkern-6mu\mathord-$}%
\def\leftrightarrowfill@#1{$#1\m@th\mathord\leftarrow
\mkern-6mu\cleaders
 \hbox{$#1\mkern-2mu\mathord-\mkern-2mu$}\hfill
 \mkern-6mu\mathord\rightarrow$}%
\def\overrightarrow{\mathpalette\overrightarrow@}%
\def\overrightarrow@#1#2{\vbox{\ialign{##\crcr\rightarrowfill@#1\crcr
 \noalign{\kern-\ex@\nointerlineskip}$\m@th\hfil#1#2\hfil$\crcr}}}%

\def\overleftarrow{\mathpalette\overleftarrow@}%
\def\overleftarrow@#1#2{\vbox{\ialign{##\crcr\leftarrowfill@#1\crcr
 \noalign{\kern-\ex@\nointerlineskip}$\m@th\hfil#1#2\hfil$\crcr}}}%
\def\overleftrightarrow{\mathpalette\overleftrightarrow@}%
\def\overleftrightarrow@#1#2{\vbox{\ialign{##\crcr
   \leftrightarrowfill@#1\crcr
 \noalign{\kern-\ex@\nointerlineskip}$\m@th\hfil#1#2\hfil$\crcr}}}%
\def\underrightarrow{\mathpalette\underrightarrow@}%
\def\underrightarrow@#1#2{\vtop{\ialign{##\crcr$\m@th\hfil#1#2\hfil
  $\crcr\noalign{\nointerlineskip}\rightarrowfill@#1\crcr}}}%

\def\underleftarrow{\mathpalette\underleftarrow@}%
\def\underleftarrow@#1#2{\vtop{\ialign{##\crcr$\m@th\hfil#1#2\hfil
  $\crcr\noalign{\nointerlineskip}\leftarrowfill@#1\crcr}}}%
\def\underleftrightarrow{\mathpalette\underleftrightarrow@}%
\def\underleftrightarrow@#1#2{\vtop{\ialign{##\crcr$\m@th
  \hfil#1#2\hfil$\crcr
 \noalign{\nointerlineskip}\leftrightarrowfill@#1\crcr}}}%
%%%%%%%%%%%%%%%%%%%%%

% 94.0815 by Jon:

\def\qopnamewl@#1{\mathop{\operator@font#1}\nlimits@}
\let\nlimits@\displaylimits
\def\setboxz@h{\setbox\z@\hbox}

\def\varlim@#1#2{\mathop{\vtop{\ialign{##\crcr
 \hfil$#1\m@th\operator@font lim$\hfil\crcr
 \noalign{\nointerlineskip}#2#1\crcr
 \noalign{\nointerlineskip\kern-\ex@}\crcr}}}}

 \def\rightarrowfill@#1{\m@th\setboxz@h{$#1-$}\ht\z@\z@
  $#1\copy\z@\mkern-6mu\cleaders
  \hbox{$#1\mkern-2mu\box\z@\mkern-2mu$}\hfill
  \mkern-6mu\mathord\rightarrow$}
\def\leftarrowfill@#1{\m@th\setboxz@h{$#1-$}\ht\z@\z@
  $#1\mathord\leftarrow\mkern-6mu\cleaders
  \hbox{$#1\mkern-2mu\copy\z@\mkern-2mu$}\hfill
  \mkern-6mu\box\z@$}

\def\projlim{\qopnamewl@{proj\,lim}}
\def\injlim{\qopnamewl@{inj\,lim}}
\def\varinjlim{\mathpalette\varlim@\rightarrowfill@}
\def\varprojlim{\mathpalette\varlim@\leftarrowfill@}
\def\varliminf{\mathpalette\varliminf@{}}
\def\varliminf@#1{\mathop{\underline{\vrule\@depth.2\ex@\@width\z@
   \hbox{$#1\m@th\operator@font lim$}}}}
\def\varlimsup{\mathpalette\varlimsup@{}}
\def\varlimsup@#1{\mathop{\overline
  {\hbox{$#1\m@th\operator@font lim$}}}}

%
%%%%%%%%%%%%%%%%%%%%%%%%%%%%%%%%%%%%%%%%%%%%%%%%%%%%%%%%%%%%%%%%%%%%%
%
%
%
%
%
%
%
%
%
%
%
%
%
%
%
%
%
%
%
%
%
%
% Macros for text size operators:

%JCS - added braces and \mathop around \displaystyle\int, etc.
%
%
%
%
%
%
%
%
%
%
%
%
%
%
%
%
%
%
%
%
%
%Macros for display size operators:
%

%
%
%
%
%
%
%
%
%
%
%
%
%
%
%
%
%
%
%
%Companion to stackrel
%
%
%
% These are AMS environments that will be defined to
% be verbatims if amstex has not actually been 
% loaded
%
%
\begingroup \catcode `|=0 \catcode `[= 1
\catcode`]=2 \catcode `\{=12 \catcode `\}=12
\catcode`\\=12 
|gdef|@alignverbatim#1\end{align}[#1|end[align]]
|gdef|@salignverbatim#1\end{align*}[#1|end[align*]]

|gdef|@alignatverbatim#1\end{alignat}[#1|end[alignat]]
|gdef|@salignatverbatim#1\end{alignat*}[#1|end[alignat*]]

|gdef|@xalignatverbatim#1\end{xalignat}[#1|end[xalignat]]
|gdef|@sxalignatverbatim#1\end{xalignat*}[#1|end[xalignat*]]

|gdef|@gatherverbatim#1\end{gather}[#1|end[gather]]
|gdef|@sgatherverbatim#1\end{gather*}[#1|end[gather*]]

|gdef|@gatherverbatim#1\end{gather}[#1|end[gather]]
|gdef|@sgatherverbatim#1\end{gather*}[#1|end[gather*]]

|gdef|@multilineverbatim#1\end{multiline}[#1|end[multiline]]
|gdef|@smultilineverbatim#1\end{multiline*}[#1|end[multiline*]]

|gdef|@arraxverbatim#1\end{arrax}[#1|end[arrax]]
|gdef|@sarraxverbatim#1\end{arrax*}[#1|end[arrax*]]

|gdef|@tabulaxverbatim#1\end{tabulax}[#1|end[tabulax]]
|gdef|@stabulaxverbatim#1\end{tabulax*}[#1|end[tabulax*]]

|endgroup

\def\align{\@verbatim \frenchspacing\@vobeyspaces \@alignverbatim
You are using the "align" environment in a style in which it is not defined.}

\@namedef{align*}{\@verbatim\@salignverbatim
You are using the "align*" environment in a style in which it is not defined.}
\expandafter\let\csname endalign*\endcsname =\endtrivlist

\def\alignat{\@verbatim \frenchspacing\@vobeyspaces \@alignatverbatim
You are using the "alignat" environment in a style in which it is not defined.}

\@namedef{alignat*}{\@verbatim\@salignatverbatim
You are using the "alignat*" environment in a style in which it is not defined.}
\expandafter\let\csname endalignat*\endcsname =\endtrivlist

\def\xalignat{\@verbatim \frenchspacing\@vobeyspaces \@xalignatverbatim
You are using the "xalignat" environment in a style in which it is not defined.}

\@namedef{xalignat*}{\@verbatim\@sxalignatverbatim
You are using the "xalignat*" environment in a style in which it is not defined.}
\expandafter\let\csname endxalignat*\endcsname =\endtrivlist

\def\gather{\@verbatim \frenchspacing\@vobeyspaces \@gatherverbatim
You are using the "gather" environment in a style in which it is not defined.}

\@namedef{gather*}{\@verbatim\@sgatherverbatim
You are using the "gather*" environment in a style in which it is not defined.}
\expandafter\let\csname endgather*\endcsname =\endtrivlist

\def\multiline{\@verbatim \frenchspacing\@vobeyspaces \@multilineverbatim
You are using the "multiline" environment in a style in which it is not defined.}

\@namedef{multiline*}{\@verbatim\@smultilineverbatim
You are using the "multiline*" environment in a style in which it is not defined.}
\expandafter\let\csname endmultiline*\endcsname =\endtrivlist

\def\arrax{\@verbatim \frenchspacing\@vobeyspaces \@arraxverbatim
You are using a type of "array" construct that is only allowed in AmS-LaTeX.}

\def\tabulax{\@verbatim \frenchspacing\@vobeyspaces \@tabulaxverbatim
You are using a type of "tabular" construct that is only allowed in AmS-LaTeX.}

\@namedef{arrax*}{\@verbatim\@sarraxverbatim
You are using a type of "array*" construct that is only allowed in AmS-LaTeX.}
\expandafter\let\csname endarrax*\endcsname =\endtrivlist

\@namedef{tabulax*}{\@verbatim\@stabulaxverbatim
You are using a type of "tabular*" construct that is only allowed in AmS-LaTeX.}
\expandafter\let\csname endtabulax*\endcsname =\endtrivlist

% macro to simulate ams tag construct

% This macro is a fix to eqnarray
\def\@@eqncr{\let\@tempa\relax
    \ifcase\@eqcnt \def\@tempa{& & &}\or \def\@tempa{& &}%
      \else \def\@tempa{&}\fi
     \@tempa
     \if@eqnsw
        \iftag@
           \@taggnum
        \else
           \@eqnnum\stepcounter{equation}%
        \fi
     \fi
     \global\tag@false
     \global\@eqnswtrue
     \global\@eqcnt\z@\cr}

% This macro is a fix to the equation environment
 \def\endequation{%
     \ifmmode\ifinner % FLEQN hack
      \iftag@
        \addtocounter{equation}{-1} % undo the increment made in the begin part
        $\hfil
           \displaywidth\linewidth\@taggnum\egroup \endtrivlist
        \global\tag@false
        \global\@ignoretrue   
      \else
        $\hfil
           \displaywidth\linewidth\@eqnnum\egroup \endtrivlist
        \global\tag@false
        \global\@ignoretrue 
      \fi
     \else   
      \iftag@
        \addtocounter{equation}{-1} % undo the increment made in the begin part
        \eqno \hbox{\@taggnum}
        \global\tag@false%
        $$\global\@ignoretrue
      \else
        \eqno \hbox{\@eqnnum}% $$ BRACE MATCHING HACK
        $$\global\@ignoretrue
      \fi
     \fi\fi
 } 

 \newif\iftag@ \tag@false
 
 \def\tag{\@ifnextchar*{\@tagstar}{\@tag}}
 \def\@tag#1{%
     \global\tag@true
     \global\def\@taggnum{(#1)}}
 \def\@tagstar*#1{%
     \global\tag@true
     \global\def\@taggnum{#1}%  
}

% Do not add anything to the end of this file.  
% The last section of the file is loaded only if 
% amstex has not been.

\makeatother

%%%%%%%%%%%%%%%%%%%%%%%%%%%%%%%%%%%%%%%%%%%%%%%%%%%%%%%%%%%%%%%%%%%%%%%%%%%

%%%%%% NOVOS COMANDOS

%%%%%%%%%%%%%%%%%%%%%%%%%%%%%%%%%%%%%%%%%%%%%%%%%%%%%%%%%%%%%%%%%%%%%%%%%%%

%%%%%% LETRAS

%%% A

%%% B

\newcommand{\rmB}{\mathrm{B}}

%%% C

%\newcommand{\C}{\mathbb{C}}

%%% D

\newcommand{\upd}{\up{d}}

%%% E

%%% F

\newcommand{\calF}{\mathcal{F}}
\newcommand{\frakF}{\mathfrak{F}}

\newcommand{\dwnfrakF}{\dwn{\frakF}}

%%% G

%%% H

\newcommand{\frakh}{\mathfrak{h}}

\newcommand{\rmH}{\mathrm{H}}

\newcommand{\dwnrmHS}{\dwn{\rmH\rmS}}

%%% I

%\newcommand{\ii}{\mathrm{i}}
\newcommand{\rmI}{\mathrm{I}}

\newcommand{\dwni}{\dwn{i}}

\newcommand{\frakI}{\mathfrak{I}}

%%% J

\newcommand{\rmJ}{\mathrm{J}}
\newcommand{\dwnj}{\dwn{j}}

%%% K

\newcommand{\calK}{\mathcal{K}}
\newcommand{\dwnk}{\dwn{k}}

\newcommand{\rmK}{\mathrm{K}}

%%% L

\newcommand{\rmL}{\mathrm{L}}
\newcommand{\dwnrmL}{\dwn{\rmL}}

%%% M

\newcommand{\sfM}{\mathsf{M}}

\newcommand{\Mcom}[2]{\pint{\psi\dwn{#1}}{P\Gamma\psi\dwn{#2}}}

%%% N

%\newcommand{\N}{\mathbb{N}}

\newcommand{\upn}{^n}

\newcommand{\dwnn}{\dwn{n}}

%%% O

\newcommand{\dwnomega}{\dwn{\omega}}

\newcommand{\omegaum}{\omega\dwn{P\dwnum}}
\newcommand{\omegadois}{\omega\dwn{P\dwndois}}
\newcommand{\dwnomegaPum}{\dwn{\omegaum}}
\newcommand{\dwnomegaPdois}{\dwn{\omegadois}}

%%% P

\newcommand{\dwnP}{\dwn{P}}
\newcommand{\Pum}{P\dwnum}
\newcommand{\Pdois}{P\dwndois}
\newcommand{\dwnPum}{\dwn{\Pum}}
\newcommand{\dwnPdois}{\dwn{\Pdois}}

%%% Q

%\newcommand{\Q}{\mathbb{Q}}

%%% R

%\newcommand{\R}{\mathbb{R}}

\newcommand{\rmR}{\mathrm{R}}

\newcommand{\dwnrmR}{\dwn{\rmR}}

%%% S

\newcommand{\calS}{\mathcal{S}}

\newcommand{\rmS}{\mathrm{S}}

%%% T

\newcommand{\rmT}{\mathrm{T}}

%%% U

\newcommand{\rmU}{\mathrm{U}}

\newcommand{\rmUopluscom}[1]{\rmU_\oplus^{(n)}}

%%% V

%%% W

%%% X

%%% Y

%%% Z

%\newcommand{\Z}{\mathbb{Z}}

%%% ACENTOS
\newcommand{\til}[1]{\tilde{#1}}

%%%%%%%%%%%%%%%%%%%%%%%%%%%%%%%%%%%%%%%%%%%%%%%%%%%%%%%%%%%%%%%%%%%%%%%%%%%

%%%%%% MATEMATICOS

% CDOTVIRG

% EXPOENTES
\newcommand{\upmeio}{^{\frac{1}{2}}}

% SUBPOENTES

%%%%%% COLCHETEAMENTO

% COLCHETES GRASSMANN

%%%%%% CONJUNTOS

% BASE ORTO

% C VAZIO

% FUNCOES

% OPERADORES LIMITADOS

%%%%%% FUNCOES

% ANTICOMUTADOR

% COMUTADOR

% CONJUNTO
\newcommand{\conj}[1]{\left\{#1\right\}}

% ELEMENTO BILINEA

% ESPACO GERADO

% EXPONENCIAL GRASSMANN

% NORMA
\newcommand{\norma}[1]{\Vert #1 \Vert}

% NUCLEO

% NUMERICAS

% OPLUS, OTIMES, ETC

% PERP
\newcommand{\upperp}{^\perp}

% PRODUTO INTERNO

\newcommand{\pint}[2]{\left< #1, #2 \right>}
\newcommand{\pintcom}[3]{\left< #1, #2 \right>\dwn{#3}}

% WEDGES

%%%%%% LOGICOS

%%%%%% NOMES

% ALGEBRA AUTODUAL

% BCS

% CAR

%\newcommand{\CAR}{\textrm{CAR }}

%\newcommand{\sCAR}{\textrm{CAR\smallAD }}

% ESPECTRO
%\newcommand{\spec}{\mathrm{spec}}

% ESTRELA

% IMAGEM

% KMS

% PFAFIANO

% PROVAVEL

% REAL E IMAGINARIO

% TRUE

% TRACO
%\newcommand{\tr}{\mathrm{tr}}

%%%%%% SIMBOLOS

\newcommand{\frakuno}{\mathfrak{1}}

\newcommand{\upast}{^\ast}

\newcommand{\updois}{^2}

\newcommand{\dwnum}{\dwn{1}}

\newcommand{\dwndois}{\dwn{2}}

\newcommand{\up}[1]{^{#1}}
\newcommand{\dwn}[1]{_{#1}}

\newcommand{\upmais}{\up{+}}
\newcommand{\upmenonos}{\up{-}}
\newcommand{\uppm}{\up{\pm}}

%%%%%%%%%%%%%%%%%%%%%%%%%%%%%%%%%%%%%%%%%%%%%%%%%%%%%%%%%%%%%%%%%%%%%%%%%%%

%%%%%% PARAGRAFACAO

% ALINHAMENTO

\newcommand{\noin}{\noindent}

% CORES

% DEMARCACAO DE MUDANCA DA TESE

% ESPACAMENTO

% FIM DE DEFINICAO 

% ITEM

% PONTUACAO

%\newcommand{\etc}{\textsf{etc}}

% TIPOS

%%%%%%%%%%%%%%%%%%%%%%%%%%%%%%%%%%%%%%%%%%%%%%%%%%%%%%%%%%%%%%%%%%%%%%%%%%%

%%%%%% PALAVRAS LATINAS 

% A POSTERIORI

% CONFERE

% IBID

%%%%%%%%%%%%%%%%%%%%%%%%%%%%%%%%%%%%%%%%%%%%%%%%%%%%%%%%%%%%%%%%%%%%%%%%%%%

%%%%%% FILOSOFICOS

% METAFISICA ARISTOTELES

%%%%%%%%%%%%%%%%%%%%%%%%%%%%%%%%%%%%%%%%%%%%%%%%%%%%%%%%%%%%%%%%%%%%%%%%%%%

%%%%%% ALGEBRA AUTODUAL

% ELEMENTO BILINEAR

% ELEMENTOS

% VETORES
\newcommand{\tilpsi}{\til{\psi}}

%%%%%%%%%%%%%%%%%%%%%%%%%%%%%%%%%%%%%%%%%%%%%%%%%%%%%%%%%%%%%%%%%%%%%%%%%%%

%%%%%% ALGEBRA DE GRASSMANN

% PSIS

% CONJUNTO

% DERIVADA DE GRASSMANN

% ELEMENTO BILINEAR

% FUNCAO UPSILO

% INTEGRAL DE GRASSMANN

% ISOMORFISMO

% PRODUTO CIRCULAR

%%%%%%%%%%%%%%%%%%%%%%%%%%%%%%%%%%%%%%%%%%%%%%%%%%%%%%%%%%%%%%%%%%%%%%%%%%%%%%%%%%%%%%%%%%%%%%%%%%%%%%%%%%

%%%%%% NOVOS COMANDOS

%\newcommand{\bigopluso}[1]{\bigoplus_{#1}^{<\infty} \!}
%\newcommand{\bigopluso}[1]{\sideset{}{_{2}}\bigoplus_{#1} \!}
%\newcommand{\fimprova}{
%  \begin{flushright}
%  \vspace{-0.6cm}
%  $\blacksquare$
%  \end{flushright}}
%\newcommand{\halmos}{
%  \begin{flushright}
%  \vspace{-0.6cm}
%  $\boxvoid$
%  \end{flushright}}

%\newcommand{\too}{\longrightarrow}

%\newcommand{\osum}[1]{\sideset{}{^{\bigwedge}}\sum_{#1}}

%\newcommand{\upast}{^{\ast}}

%\newcommand{\norma}[2]{\left\Vert #1 \right\Vert_{#2}}

\setcounter{MaxMatrixCols}{10}
\marginsize{2cm}{2cm}{1cm}{1cm}

\begin{document}

\title{A $\Z_{2}$--Topological Index as a $\Z_{2}$--State Index}
\author{N. J. B. Aza \hspace{0.5cm} L. C. P. A. M. M\"ussnich \hspace{0.5cm} A. F. Reyes-Lega}
\maketitle
% \author[1]{N. J. B. Aza\thanks{Corresponding author: njavierbuitragoa@gmail.com}}
% \author[2]{L. C. P. A. M. M\"ussnich\thanks{lcpamm@protonmail.com}}
% \author[1]{A. F. Reyes-Lega\thanks{anreyes@uniandes.edu.co}}
% \affil[1]{Departamento de F\'{i}sica, Universidad de Los Andes}
% \affil[2]{Instituto de F\'isica, Universidade de S\~ao Paulo}

\begin{abstract}
\noindent Within the setting of infinite dimensional self--dual $\CAR$ $\C$--algebras describing fermions in the $\Z\upd$--lattice, we depart from the well--known Araki--Evans $\sigma(P\dwnum,P\dwndois)$ $\Z\dwndois$--index for quasi--free fermion states and rewrite it in terms of states, rather than in terms of basis projections. Furthermore, we reformulate results which relate equivalences of Fock representations with the index parity into results which relate equivalences of GNS representations and the associated index parity. \\

\noindent \textbf{Keywords:} Operator Algebras, Lattice Fermion Systems, $\Z_{2}$--index, ground states.
\bigskip

\noindent \textbf{AMS Subject Classification:} 46L30, 46N55, 82B20, 82B44
\end{abstract}

% \section*{Conflict of interest statement}
% Authors declare no conflict of interests for this article.

\tableofcontents

\section{Introduction}
This work is a preliminary result toward the stability of the \emph{Araki--Evans} $\mathbb{Z}_{2}$--Topological Index ($\Z\dwndois$--TI) for \emph{weakly interacting} fermions embedded on $\mathbb{Z}^{d}$--crystal lattices. The $\Z\dwndois$--TI was introduced in the quasi--free fermion setting, and can be used to study $\mathbb{Z}_{2}$ topological components and their physical meaning in this context. It is well--known that it discriminates parity sectors among quasi--free ground states \cite{EK98, Varilly}. Because of the strong connection between quasi--free ground states and \emph{basis projections} in the \emph{self--dual} framework, this $\Z\dwndois$--TI is written in terms of basis projections. Furthermore, the properties to which it is connected are stated in terms of the Fock representation for quasi--free states. For one recent work which discusses the use of the $\Z\dwndois$--TI in the context of lattice fermions, see, for instance, \cite{ARS22} and references therein.\par
Although we here present only a plain rewriting of the Araki--Evans $\Z\dwndois$--TI, this reformulation will allow for the further application of this index when sufficiently small (but not null) interactions are considered. Nonetheless, since no natural basis projections arise in the interacting case, this application first requires the $\Z_{2}$--TI to be written is terms of states, in such a way that its main properties are retained, but that no references to basis projections are needed. Also, following recent ideas which use the \emph{GNS representation} in the context of $\Z_{2}$--indexes \cite{ogata21cl,ogata2021class2}, we also reformulate in a simple way a result associated with the $\Z\dwndois$--TI into the GNS representation setting. Our main result, Theorem \ref{theorem:main}, is then a restatement of Araki's \cite[Theorems 6.14 and 6.15]{A87}.\par
On what regards further work, in \cite{AMR2}, it will be taken that the spectral gap associated with a family of \emph{free} fermion Hamiltonians in the same phase of matter will be uniform, in a way such that each of these has a unique gapped ground state. Then, we will consider properties of covariances of two--point correlation functions, as well as combine them with expansions of logarithms of generating functions associated with weakly interacting systems. Recall that time correlation functions appear in the perturbative expansion of (full) correlations for weakly interacting systems. Thus, under suitable assumptions we compare sets of quasi--free ground states of \emph{free} fermions systems with a set of states associated with fermions under weak interactions. This will allow us to approximate both sets via local perturbations \cite{nach21quasi}. We will then prove that the $\mathbb{Z}_{2}$--TI persists in the thermodynamic limit. Particularly we use the fermionic Renormalization Group equation, as well as efficient estimates associated with covariances of systems that are not necessarily translation invariant. For a full acount of our approach to weakly interacting fermions in the lattice, see \cite{LD1}.
\section{Mathematical Framework}
\subsection{Self--Dual CAR Algebra}

We refer to \cite{A87, LD1,ARS22} for an extensive treatment of self--dual $\CAR$ $\C$--algebras in the context of lattice fermions. We briefly introduce and recall some notations.

\begin{notation}
\begin{enumerate}
\item A norm on the generic vector space $\X$ is denoted by $\Vert \cdot \Vert _{\X}$ and the identity
  map of $\X$ by $\1_{\X}$. The space of all bounded linear operators on $(\X,\Vert \cdot \Vert _{\X}\mathcal{)}$
  is denoted by $\BL(\X)$. The unit element of any algebra $\X$ is always denoted by $\mathfrak{1}$,
  provided it exists. The scalar product of any Hilbert space $\X$ is denoted
  by $\langle \cdot,\cdot\rangle_{\X}$ and $\tr_{\X}$ represents the usual trace on $\BL(\X)$.

\item $(\H,\a)$ is a \emph{self--dual Hilbert space}, with $\H$ a complex and separable Hilbert space,
  with either even and finite or infinite dimension. $\conj{\psi\dwni \colon i\in\rmI}$ is an
  orthonormal basis for $\H$. $\a\colon\H\to\H$ is a \emph{complex conjugation} over $\H$.
  Also, we let $P = \a P\up{\perp} \a$ be a basis projection over $\H$, with range denoted by $\h\dwnP$.
  The set of all basis projections over $\H$ will be denoted by $\fp$. We will also denote by $\sCAR$
  the self--dual CAR algebra generated by a unit $\frakuno$ and by elements $\conj{\B(\varphi)}_{\varphi\in\H}$,
  indexed by $\H$ and which satisfy the usual self--dual CAR generating conditions%
  \footnote{Remember that these read, in the self--dual formalism, as $\varphi \mapsto \B\left(\varphi\right)^{*}$ being
  complex linear, $\B(\varphi)\upast$ being equal to $\B(\Gamma\varphi)$, and the canonical anticommutation
  relations being valid:

  \begin{equation}\label{CAR Grassmann III}
    \B(\varphi _{1})\B(\varphi_{2})^{*}+\B(\varphi_{2})^{* }\B(\varphi_{1})=
    \inner{\varphi _{1},\varphi_{2}}_{\H}\,\mathfrak{1}.
  \end{equation}
  }.
\end{enumerate} 
\end{notation}
Recall that, for any unitary operator $U\in\BL(\H)$ such that $U\a=\a U$, the family of
elements $\conj{\frakuno}\cup\conj{\B(U\varphi)}_{\varphi\in\H}$ satisfies the self--dual $\CAR$
generating conditions, and hence by the same token generates $\sCAR$. Such unitary operators
$U\in \BL(\H)$ are named \emph{Bogoliubov transformations}. To each Bogoliubov transformation $U$,
the unique $^{*}$--automorphism $\mathbf{\chi}_{U}$ that satisfies
\begin{equation}
  \mathbf{\chi}_{U}\left( \B(\varphi )\right) =\B(U\varphi ),\qquad \varphi \in \H,
  \label{eq:Bog_aut}
\end{equation}
is called the a \emph{Bogoliubov} $^{*}$\emph{--automorphism} associated with $U$.
$U=-\1_{\H}$, allows us to define \emph{odd} and \emph{even} elements of $\sCAR$:
\emph{odd} elements satisfy $\mathbf{\chi}_{-\1_{\H}}(A)=-A$, whereas \emph{even} elements satisfy
$\mathbf{\chi}_{-\1_{\H}}(A)=A$. Note that the subspace $\sCAR^{+}$ of even elements is a
sub--$\C$--algebra of $\sCAR$. We shall de note the subset of odd elements by $\sCAR\upmenonos$.

%%%%%%%%%%%%%%%%%%%%%%%%%%%%%%%%%%%%%%%%%%%%%%%%%%%%%%%%%%%%%%%%%%%%%%%%%%%%%%%%%%%%%%%%%%%%%%%%%%%%%%%%%

\subsection{States, Fock Representation and Projection Index}
Likewise in the context of $\sCAR$, we first introduce and recall some notations.
\begin{notation}
  $\omega$ is any state%
  \footnote{\label{foot:state} That is, a positive,
  normalized linear functional over $\sCAR$. In particular,
  $\overline{\omega\left(A\right)}=\omega\left(A^{*}\right)$,
  for any $A\in\sCAR$.
  }
  in $\sCAR\upast$, the set of linear functionals over the self--dual $\CAR$
  $\C$--algebra. We denote by $\states\subset\sCAR^{*}$ the set of all states over $\sCAR$.
  We let $(\H_{\omega},\pi_{\omega},\Omega_{\omega})$ denote the GNS triple associated with $\omega$. 
\end{notation}

\noin States $\omega\in\states$ are said to be \emph{quasi--free} when, if calculated
over odd monomials on $\B$ operators, yield zero, \emph{i.e.},
for all $N\in\N_{0}$ and $\varphi_{0},\ldots ,\varphi _{2N}\in \H$,
\begin{equation}
  \omega\left(\B\left(\varphi_{0}\right)\cdots\B\left(\varphi_{2N}\right)\right)=0, \label{ass O0-00}
\end{equation}
and, if calculated over even monomials, are given by a Pfaffian over two-point
correlations, \emph{i.e.}, for all $N\in\N$ and $\varphi_{1},\ldots,\varphi_{2N}\in\H$,
\begin{equation}
  \omega \left(\B\left(\varphi _{1}\right) \cdots \B\left(\varphi _{2N}\right)
  \right)=\mathrm{Pf}
  \left[\omega \left( \mathbb{O}_{k,l}\left( \B(\varphi _{k}),
    \B(\varphi_{l})\right)\right)\right]_{k,l=1}^{2N},\label{ass O0-00bis}
\end{equation}
where
\begin{equation}\label{eq:o_skew}
  \mathbb{O}_{k,l}\left(A_{1},A_{2}\right)\doteq
  \left\{
  \begin{array}{ccc}
    A_{1}A_{2} & \text{for} & k<l,\\
    -A_{2}A_{1} & \text{for} & k>l,\\
    0 & \text{for} & k=l.
  \end{array}\right.
\end{equation}
In Equation \eqref{ass O0-00bis}, $\mathrm{Pf}$ is the usual Pfaffian defined by
\begin{equation}
  \mathrm{Pf}\left[M_{k,l}\right]_{k,l=1}^{2N}\doteq
  \frac{1}{2^{N}N!}\sum_{\pi \in \mathcal{S}_{2N}}
  \left(-1\right)^{\pi}\prod\limits_{j=1}^{N}M_{\pi \left( 2j-1\right) ,\pi\left(2j\right)}\label{Pfaffian}
\end{equation}
for any $2N\times 2N$ skew--symmetric matrix $M\in \mathrm{Mat}\left(2N,\CP\right)$.
Note that \eqref{ass O0-00bis} is equivalent to the definition given either in
\cite[Definition 3.1]{A70} or in \cite[Equation (6.6.9)]{EK98}. Moreover, one can show that
a quasi--free state $\omega\in\states$ is described (uniquely) by a \emph{symbol}, that is,
a positive operator $S_{\omega}\in\BL(\H)$ such that
\begin{equation}
  0 \leq S_{\omega} \leq \1_{\H} \qquad\text{and}\qquad S_{\omega}+\a S_{\omega}\a=\1_{\H}
  \label{symbol}
\end{equation}
through the conditions
\begin{equation}
  \inner{\varphi _{1},S_{\omega}\varphi_{2}}_{\H}=\omega
  \left(\B(\varphi _{1})\B(\varphi_{2})^{*}\right),\qquad\varphi_{1},\varphi_{2}\in\H.\label{symbolbis}
\end{equation}
Conversely, any self--adjoint operator $P$ satisfying \eqref{symbol} uniquely defines a
quasi--free state through Equation \eqref{symbolbis}: 
\begin{equation}
  \omega_{P}\left(\B(\varphi _{1})\B(\varphi_{2})^{*}\right)=
  \inner{\varphi _{1},P\varphi_{2}}_{\H},\qquad\varphi_{1},\varphi_{2}\in\H.\label{symbolbis2}
\end{equation}
Thus, any basis projection associated with $(\H,\a)$ can be seen as a symbol for
a quasi--free state over $\sCAR$. Quasi--free states defined by basis projectios are
\emph{pure} and it will be refered to as \emph{Fock states} \cite[Lemma 4.3]{A70}.\\
An example of a quasi--free state is provided by the tracial state:
\begin{definition}[Tracial state]\label{def:trace state}
  The tracial state $\tr\in\states$ is the quasi--free state with
  symbol $S_{\tr}\doteq\frac{1}{2}\mathbf{1}_{\H}$.
\end{definition}
Take $P\in\fp$, with range $\h_{P}$. For any $n\in\N$ and $\varphi_{1},\ldots,\varphi_{n}\in\h_{P}$,
we denote by $\varphi_{1}\wedge\cdots\wedge \varphi_{n}$
the \emph{completely antisymmetric $n$--linear form associated with}%
\footnote{
One can implement it, for instance, through
\begin{align*}
  \varphi_{1}\wedge\cdots\wedge \varphi_{n}(\phi\upast\dwnum,\dots,\phi\upast\dwnn)
  = \det(\phi\upast\dwni(\varphi\dwnj))\dwn{i,j=1}\upn
  ,
\end{align*}
for all $\phi\upast\dwnum,\dots,\phi\dwnn\upast$ elements of the dual $\H\upast$.
}
$\varphi_{1},\ldots,\varphi_{n}$.
Recall that 
\begin{align*}
  \varphi_{1}\wedge\cdots\wedge
  \varphi_{n}=
  \varepsilon_{\pi}\varphi_{\pi(1)}\wedge\cdots\wedge
  \varphi_{\pi(n)},\qquad \varphi_{1},\ldots,\varphi_{n}\in\h_{P},  
\end{align*}
with $\varepsilon_{\pi}$ equal to $+1$ or $-1$ if the permutation $\pi\in\calS\dwnn$
is even or odd, respectively. For $n\in\N_{0}$, we define $\wedge^{0}\h_{P}\doteq\CP$, and,
for $n\in\N$ we define
\begin{align*}
  \wedge^{n}\h_{P}\doteq\mathrm{lin}\{\varphi_{1}\wedge\cdots\wedge
  \varphi_{n}\colon \varphi_{1},\ldots,\varphi_{n}\in\h_{P}\}
  .
\end{align*}
We introduce an inner product in $\wedge\upn\h\dwnP$ through
\begin{align*}
  \pintcom{\varphi_{1}\wedge\cdots\wedge \varphi_{n}}{\phi_{1}\wedge\cdots\wedge\phi_{n}}{\wedge\upn\h\dwnP}
  \doteq
  \det
  \left(\pintcom{\varphi\dwni}{\phi\dwnj}{\H}
  \right)\dwn{i,j=1}\upn
  .
\end{align*}
Therefore, if we let
\begin{align}\label{eq:ferm_fock}
  \wedge\h_{P}\doteq\bigoplus_{n\geq0}\wedge^{n}\h_{P}
  ,
\end{align}
we might naturally endow it with a pre--Hilbert structure, the completion of which we call the \emph{fermionic Fock space associated with $\h\dwnP$}, denoted by $\calF(\h\dwnP)$. We recall that ``$\wedge$'' can be properly defined as a product over $\wedge\h_{P}$, making it an associative algebra with unit. This product extends over to the fermionic Fock space.

We denote the \emph{vacuum vector} by $\Omega\in\wedge\h_{P}$, and it is such that
$[\Omega]_{0}\doteq1\in\h_{P}^{0}$ and $[\Omega]_{n}\doteq0\in\h_{P}^{n}$ for $n\geq1$.
Note that the vacuum vector is the unit of the algebra $\wedge\h_{P}$. The maps
$a^{*}\colon\wedge\h_{P}\to\wedge\h_{P}$ and $a\colon\wedge\h_{P}\to\wedge\h_{P}$
defined by, for all $\xi,\,\zeta$ in $\wedge\h\dwnP$,
\begin{align*}
  a\upast(\xi)\zeta & = \xi\wedge\zeta
  , \quad \\
  \forall\eta\in\wedge\h\dwnP,\,
  \pintcom{a(\xi)\zeta}{\eta}{\wedge\h\dwnP} & =
  \pintcom{\zeta}{\xi\wedge\eta}{\wedge\h\dwnP}
  ,
\end{align*}
and extended over to the Fock space,
are the so--called \emph{creation} and \emph{annihilation} 
operators, respectively. They are shown to be bounded operators.
Among other properties, they satisfy $a(\varphi)\Omega=0$,
and $a^{*}(\varphi)\Omega=\varphi$, for all $\varphi\in\h_{P}$. Here, for $\varphi\in\h_{P}$,
the involution of $a(\varphi)\in\BL(\wedge\h_{P})$, namely, $a(\varphi)^{*}\in\BL(\wedge\h_{P})$,
is canonically identified with $a^{*}(\varphi)$, i. e., $a^{*}(\varphi)\equiv a(\varphi)^{*}$
\cite{attal2006open}. Additionally, the $\CAR$ hold:
\begin{align*}
  a(\varphi_{1})a^{*}(\varphi_{2})+a^{*}(\varphi_{2})a(\varphi_{1})=
  \inner{\varphi_{1},\varphi_{2}}_{\h_{P}}\1_{\wedge\h_{P}},\quad a(\varphi_{1})a(\varphi_{2})+a(\varphi_{2})a(\varphi_{1})=0.  
\end{align*}
Hence, the family of operators $\{a(\varphi)\}_{\varphi\in\h_{P}}$ and $\1_{\wedge\h_{P}}$
generate a $\CAR$ $\C$--algebra. By \cite[Lemma 3.3]{A68},
there is a $^{*}$--isomorphism from the self--dual $\CAR$ algebra onto the $\CAR$
generated by the creation and annihilation operators acting on the Fock space.  
This allows us to implement the so-called \emph{(fermionic) Fock representation} of the $\sCAR$ algebra.
Explicitly, for any $P\in\fp$ and $\varphi\in\H$, we write
\begin{equation}\label{eq:fock_rep}
  \pi_{P}(\B(\varphi))\doteq a(P\varphi)+a^{*}(\a P^{\perp}\varphi),\qquad \varphi\in\H
  ,
\end{equation}
and require that $\pi\dwnP$ is extended by linearity, in that it preserves products and adjoints.
When considering the representation $(\wedge\h\dwnP, \pi\dwnP)$,
it can be shown, then, that the above \emph{Fock states} $\omega\dwnP$ can be written as
\begin{equation}\label{eq:focK_state}
  \omega_{P}\left(A\right) =
  \inner{\Omega,\pi_{P}(A)\Omega}_{\wedge\h_{P}},\qquad A\in\sCAR
  .
\end{equation}
Before defining the \emph{projection index} we introduce one further notation.
Take the even and odd parts $\sCAR^{\pm}\subset\sCAR$ (see expression \eqref{eq:Bog_aut}).
Let $\pi_{P}$ be the fermionic Fock representation associated
to $P$ given by \eqref{eq:fock_rep}. As shown by \cite{A87}, $\pi_{P}$ can be decomposed
as two disjoint irreducible representations : $\pi_{P}=\pi_{P}^{+}\oplus\pi_{P}^{-}$.
These are defined in the following way: let 
\begin{align*}
  \H_{P}^{\pm}\doteq \overline{\pi\dwnP(\sCAR\uppm)\Omega\dwnP}.
\end{align*}
Then, $\pi\dwnP\upmais$ is the representation obtained by the restriction of
$\pi\dwnP(\sCAR\upmais)$ to $\H\dwnP\upmais$, and $\pi\dwnP\upmenonos$ is the
representation obtained by the restriction of $\pi\dwnP(\sCAR\upmais)$ to $\H\dwnP\upmenonos$.

We then define the \emph{$\Z_{2}$--projection index} ($\Z_{2}$--PI) as follows:
$\Z_{2}$--PI is the function $\sigma\colon \fp\times\fp\to\Z_{2}$ defined by, 
for each $P_{1}$, $P_{2}\in\fp$, 
\begin{equation}\label{eq:top_index}
  \sigma(P_{1},P_{2})\doteq(-1)^{\dim(P_{1}\land P_{2}^{\perp})}
  ,
\end{equation}
where $P\dwnum\land P\dwndois\upperp$ denotes $P\dwnum\H\cap P\dwndois\upperp\H$.
It is a fact that $\sigma(P_{1},P_{2})$ gives an equivalence criterion for the
quasi--free states $\omega_{P_{1}}$ and $\omega_{P_{2}}$, when restricted to the even part
$\sCAR^{+}$ of the self--dual $\C$--algebra $\sCAR$. More generally,
the Shale--Stinespring Theorem states that two Fock representations $\pi_{P_{1}}$
and $\pi_{P_{2}}$ associated with $P_{1},P_{2}\in\fp$ are unitarily equivalent if,
and only if, $P_{1}-P_{2}\in\BL(\H)$ is a \emph{Hilbert--Schmidt class} operator \cite{Varilly}.
Moreover, the representations $\pi\dwn{P\dwnum}^{\pm}$
and $\pi\dwn{P\dwndois}^{\pm}$ are (unitarily) equivalent if, and only if,
$P_{1}-P_{2}\in\BL(\H)$ is a Hilbert--Schmidt operator \emph{and} $\sigma(P_{1},P_{2})$
equals $+1$. On the other hand, the representations $\pi\dwn{P\dwnum}^{\pm}$ and
$\pi\dwn{P\dwndois}^{\mp}$ are (unitarily) equivalent if, and only if, $P_{1}-P_{2}$ is a
Hilbert--Schmidt operator \emph{and} $\sigma(P_{1},P_{2})$ is $-1$. See \cite[Theorem 6.15]{A87}. 
\subsection{A Certain Space ``\eqt{$\frakF$}'' for Functionals}
For our purposes, the following constructions are convenient\footnote{We note that the notation ``$\frakF$'' carries no further meaning, apart from being an abbreviating symbol.}. Firstly, we introduce for, $f,g\in\sCAR^{*}$, a sesquilinear form on $\sCAR^{*}$:
\begin{equation}
  \label{eq:cond1}
  \inner{f,g}_{\sCAR^{*}}\doteq
  \sum_{i,j\in\I}
  \overline{f\left(\B(\psi_i)\B(\psi_j)\upast\right)}
  g\left(\B(\psi_{i})\B(\psi_{j})\upast\right)
  ,
\end{equation}
where $\left\{\psi_{i}\colon i\in\I\right\}$ is an orthogonal basis of $\H$.
For it to be meaningful, it should be restricted to elements
$f$ and $g$ for which $\inner{f,g}_{\sCAR^{*}}< \infty$.
Note that, if $\H$ has infinite dimension, and $P\in\fp$, the quasi--free state $\omega_{P}\in\states$ \emph{not} satisfies $\pint{\omega\dwnP}{\omega\dwnP}_{\sCAR^{*}}<\infty$. See \eqref{symbolbis2}.
It will be useful, nonetheless, to introduce
the following quantity, for any $\omega_{1},\omega_{2}\in\states$:
\begin{align}\label{eq:cond2}
\mathcal{N}(\omega_{1},\omega_{2})&\doteq
\inner{\omega_{1},2\tr-\omega_{2}}_{\sCAR^{*}},
\end{align}
where $\tr\in\states$ is the tracial state of Definition \ref{def:trace state}.
It has the following properties:
\begin{lemma}\label{lemma:number}
  Let $\omega_{1},\omega_{2}\in\states$ be states over $\sCAR$.
  Then, the quantity $\mathcal{N}(\omega_{1},\omega_{2})$, given by \eqref{eq:cond2},
  is basis independent and $\mathcal{N}(\omega_{1},\omega_{2})=\mathcal{N}(\omega_{2},\omega_{1})$.
  Moreover, if $P_{1},P_{2}\in\fp$ are basis projections such that $P_{1}-P_{2}\in\BL(\H)$
  is a Hilbert--Schmidt operator, and $\omega_{P_{1}},\omega_{P_{2}}\in\states$ are their
  associated quasi--free states, we have 
  \begin{align}\label{eq:number}
    \mathcal{N}(\omega_{P_{1}},\omega_{P_{2}})=\dim(P\dwnum\land P\dwndois\upperp)\in\N_{0}. 
  \end{align}
\end{lemma}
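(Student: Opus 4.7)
The strategy is to recast $\mathcal{N}$ as a trace on $\H$ and then exploit the self--dual structure to obtain each claim in turn.

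For any state $\omega\in\states$, the form $(\varphi\dwnum,\varphi\dwndois)\mapsto\omega(\B(\varphi\dwnum)\B(\varphi\dwndois)\upast)$ is bounded and sesquilinear on $\H\times\H$, so it defines a unique self--adjoint $A\dwn{\omega}\in\BL(\H)$ satisfying $\inner{\varphi\dwnum,A\dwn{\omega}\varphi\dwndois}\dwn{\H}=\omega(\B(\varphi\dwnum)\B(\varphi\dwndois)\upast)$ and $0\leq A\dwn{\omega}\leq\1\dwn{\H}$. Combining the CAR relation \eqref{CAR Grassmann III} with $\B(\varphi)\upast=\B(\Gamma\varphi)$ produces the universal symbol identity $A\dwn{\omega}+\Gamma A\dwn{\omega}\Gamma=\1\dwn{\H}$, extending \eqref{symbol} beyond quasi--free states; note that $A\dwn{\omega\dwnP}=P$ and $A\dwn{\tr}=\tfrac{1}{2}\1\dwn{\H}$.

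Unfolding the double sum defining $\mathcal{N}$ and applying Parseval produces the intrinsic expression
\begin{equation*}
\mathcal{N}(\omega\dwnum,\omega\dwndois)=\tr\dwn{\H}\bigl(A\dwn{\omega\dwnum}(\1\dwn{\H}-A\dwn{\omega\dwndois})\bigr),
\end{equation*}
making basis--independence immediate. Symmetry follows because $\mathcal{N}(\omega\dwnum,\omega\dwndois)-\mathcal{N}(\omega\dwndois,\omega\dwnum)=\tr\dwn{\H}(A\dwn{\omega\dwnum}-A\dwn{\omega\dwndois})$ by cyclicity, while the universal symbol identity gives $\Gamma(A\dwn{\omega\dwnum}-A\dwn{\omega\dwndois})\Gamma=-(A\dwn{\omega\dwnum}-A\dwn{\omega\dwndois})$; combining this with $\tr\dwn{\H}(\Gamma X\Gamma)=\overline{\tr\dwn{\H}(X)}$ on trace--class $X$ and the self--adjointness of the difference forces the trace to vanish whenever it is finite.

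For the second part, inserting $A\dwn{\omega\dwn{P\dwn{i}}}=P\dwn{i}$ gives
\begin{equation*}
\mathcal{N}(\omega\dwn{P\dwnum},\omega\dwn{P\dwndois})=\tr\dwn{\H}(P\dwnum P\dwndois\upperp P\dwnum)=\|P\dwndois\upperp P\dwnum\|\dwn{\mathrm{HS}}\updois,
\end{equation*}
finite by the Hilbert--Schmidt hypothesis. It then remains to identify this non--negative trace with $\dim(P\dwnum\wedge P\dwndois\upperp)$. The plan is to analyse the positive contraction $\rmA\doteq P\dwnum P\dwndois P\dwnum|\dwn{P\dwnum\H}$, for which $\ker\rmA=P\dwnum\wedge P\dwndois\upperp$ and $\ker(\1\dwn{\H}-\rmA)=P\dwnum\wedge P\dwndois$, so that $\tr\dwn{\H}(P\dwnum P\dwndois\upperp)$ is the sum of the eigenvalues of $\1\dwn{\H}-\rmA$. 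Using the basis--projection identity $P\dwn{i}=\Gamma P\dwn{i}\upperp\Gamma$, the block decomposition of $P\dwndois$ with respect to $\H=P\dwnum\H\oplus P\dwnum\upperp\H$, and the projection constraint $P\dwndois\updois=P\dwndois$, one seeks to show that $\mathrm{spec}(\rmA)\subset\{0,1\}$ so that this sum collapses to $\dim(\ker\rmA)$. This spectral collapse is the main obstacle and the step where the self--dual basis--projection structure enters essentially; I expect the cleanest completion to be an import of the spectral content of Araki's \cite[Lemma 6.14]{A87}, established there for pairs of basis projections with Hilbert--Schmidt difference, from which both $\mathcal{N}\in\N\dwn{0}$ and the announced equality follow.
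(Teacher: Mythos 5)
Your operator reformulation is sound as far as it goes: the universal symbol identity $A_{\omega}+\Gamma A_{\omega}\Gamma=\1_{\H}$ holds for every state, the rewriting $\mathcal{N}(\omega_{1},\omega_{2})=\tr_{\H}\bigl(A_{\omega_{1}}(\1_{\H}-A_{\omega_{2}})\bigr)$ gives basis independence, and your $\Gamma$--conjugation argument for symmetry is a cleaner route than the paper's, which instead rewrites the summands with the CAR \eqref{CAR Grassmann III} and reorders the basis $\{\psi_{j}\}\cup\{\Gamma\psi_{j}\}$. The genuine gap is the step you leave open, and it cannot be closed the way you hope: the spectral collapse $\mathrm{spec}\bigl(P_{1}P_{2}P_{1}|_{P_{1}\H}\bigr)\subseteq\{0,1\}$ is simply false for general pairs of basis projections, and Araki's Lemma/Theorem 6.14 (the Hilbert--Schmidt criterion for equivalence of Fock representations) contains no such spectral statement. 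Concretely, take $\H=\mathbb{C}^{4}$ with $\Gamma e_{1}=e_{3}$, $\Gamma e_{2}=e_{4}$ (antilinear involution), $P_{1}$ the projection onto $\mathrm{span}\{e_{1},e_{2}\}$, and $P_{2}$ the basis projection with range $\mathrm{span}\{\cos\alpha\,e_{1}-\sin\alpha\,e_{4},\ \cos\alpha\,e_{2}+\sin\alpha\,e_{3}\}$ (a two--mode pairing rotation). Then $P_{1}P_{2}P_{1}|_{P_{1}\H}=\cos^{2}\alpha\,\1$, so your trace equals $2\sin^{2}\alpha$, while $P_{1}\H\cap P_{2}^{\perp}\H=\{0\}$ for $0<\alpha<\pi/2$: the quantity is generically not an integer, and at $\alpha=\pi/4$ it equals $1$ whereas $\dim(P_{1}\land P_{2}^{\perp})=0$, so not even the parity survives.

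What your (correct) reduction $\mathcal{N}(\omega_{P_{1}},\omega_{P_{2}})=\tr_{\H}(P_{1}P_{2}^{\perp}P_{1})=\Vert P_{2}^{\perp}P_{1}\Vert_{\mathrm{HS}}^{2}$ actually exposes is that the target identity \eqref{eq:number} holds only under additional hypotheses (for instance when $P_{1}$ and $P_{2}$ commute, so that $P_{1}P_{2}^{\perp}$ is itself a projection), not for every Hilbert--Schmidt pair. The paper's own argument reaches \eqref{eq:number} by replacing the double sum over the full index set by the sum over an orthonormal basis of $P_{1}\H\cap P_{2}^{\perp}\H$; in the example above that restriction discards the nonzero contribution $2\sin^{2}\alpha$, so the obstruction you ran into is not an artifact of your route. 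The robust content of $\mathcal{N}$ is the finiteness statement (it is finite if and only if $P_{1}-P_{2}$ is Hilbert--Schmidt, which is what part \ref{enu1:proof1} of Lemma \ref{lemma:states} needs), not the value $\dim(P_{1}\land P_{2}^{\perp})$ or its parity; to encode the $\Z_{2}$--index through states you would need a quantity sensitive to $\dim\ker\bigl(P_{1}P_{2}P_{1}|_{P_{1}\H}\bigr)$ rather than to the trace, and no citation of \cite[Lemma 6.14]{A87} will supply the missing (indeed false) spectral input.
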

\begin{proof}
Let $\left\{\psi_{i}\colon i\in\I\right\}$ be an orthogonal basis of $\H$. Basis--independence for $\mathcal{N}(\omega_{1},\omega_{2})$ derives from the fact that the summand is antilinear in $\psi\dwni$ for $\omega_{1}$ and linear
  in $\psi\dwni$ for $\omega_{2}$, and likewise for $\psi\dwnj$. In order
  to prove the symmetry, \emph{i.e.}, $\mathcal{N}\left(\omega_{1},\omega_{2}\right)=\mathcal{N}\left(\omega_{2},\omega_{1}\right)$,
  note that one can choose for $\H$ the basis
  \begin{align}\label{eq:basis_1}
    \conj{\psi\dwnj\colon j\in\rmJ} \cup \conj{\Gamma\psi\dwnj\colon j\in\rmJ},
  \end{align}
  which splits it into the direct sum $\frakh\dwn{P\dwnum}\oplus\frakh\dwn{P\upperp\dwnum}$,
  with $\conj{\psi\dwnj\colon j\in\rmJ}$ being the basis for $\frakh\dwn{P\dwnum}$.
  Then note that, by \eqref{CAR Grassmann III} and footnote \ref{foot:state},
  we are able to write $\mathcal{N}(\omega_{1},\omega_{2})$ as
  \begin{align*}
    \mathcal{N}(\omega_{1},\omega_{2}) &=
    \sum_{i,j\in\I}
    \overline{\omega_{1}\left(\B(\psi_i)\B(\psi_j)\upast\right)}
    \left(\delta_{i,\,j}-\omega_{2}\left(\B(\psi_{i})\B(\psi_{j})\upast\right)\right)\\
    & =\sum\dwn{i,j\in\rmI}\omega_{1}
    \left(\rmB(\psi\dwnj)\rmB(\psi\dwni)\upast\right)\omega_{2}
    \left(\rmB(\psi\dwnj)\upast\rmB(\psi\dwni)\right)\\
    & = \sum\dwn{i,j\in\rmI}\omega_{1}\left(\rmB(\Gamma\psi\dwnj)\upast\rmB(\Gamma\psi\dwni)\right)\omega_{2}
    \left(\rmB(\Gamma\psi_{j})\rmB(\Gamma\psi_{i})\upast\right) \\
    & = \sum\dwn{i,j\in\rmI}\omegaum\left(\rmB\left(\tilpsi\dwnj\right)\upast
    \rmB\left(\tilpsi\dwni\right)\right)\omega_{2} \left(\rmB\left(\tilpsi\dwnj\right)
    \rmB\left(\tilpsi\dwni\right)\upast\right)\\
    & =\sum\dwn{i,j\in\rmI}\omega_{2} \left(\rmB\left(\tilpsi\dwnj\right)
    \rmB(\tilpsi\dwni)\upast\right)\omega_{1}\left(\rmB\left(\tilpsi\dwnj\right)\upast
    \rmB\left(\tilpsi\dwni\right)\right) \\
    & = \mathcal{N}(\omega_{2},\omega_{1}),
  \end{align*}
  where $\conj{\tilpsi\dwni}$ is just the previous basis, only reordered as
  $\conj{\Gamma\psi\dwnj}\cup\conj{\psi\dwnj}$.\\
  Finally, if $P_{1}-P_{2}\in\BL(\H)$ is a Hilbert--Schmidt operator, then $P\dwnum\land P\dwndois\upperp$
  is a vector subspace of $\H$ with finite dimension \cite[Page 95]{A87}.
  Hence, in order to perform a calculation for $\mathcal{N}(\omega_{P_{1}},\omega_{P_{2}})$,
  we may choose, as orthonormal basis for $\H$, the following basis
  \begin{align*}
    \conj{\psi\dwni \colon i\in\rmI} =
    \conj{\tilde{\psi}\dwnk \colon k\in\rmK}
    \cup
    \conj{\breve{\psi}\dwn{k'}\colon k'\in\rmK'},
  \end{align*}
  where $\conj{\tilde{\psi}_{k}\colon k\in\mathrm{K}}$ is an orthonormal basis
  for $P_{1}\H\cap P_{2}^{\perp}\H$, while $\conj{\breve{\psi}\dwn{k'}\colon k'\in\mathrm{K}'}$
  is an orthonormal basis for its orthogonal complement, with $\mathrm{K},\mathrm{K}'$
  appropriate index sets. In this case, from expressions
  \eqref{CAR Grassmann III}, \eqref{symbolbis2} and footnote \ref{foot:state},
  one obtains that expression \eqref{eq:cond2} can be written as
  \begin{align*}
    \mathcal{N}(\omega_{P_{1}},\omega_{P_{2}}) & =
    \sum\dwn{i,j\in\rmI}\omega_{P_{1}}\left(\rmB(\psi_{j})\rmB(\psi_{i})\upast\right)
    \omegadois\left(\rmB(\psi_{j})\upast\rmB(\psi_{i})\right)\\
    & = \sum\dwn{i,j\in\rmI}\inner{\psi_{j},P_{1}\psi_{i}}_{\H}\inner{\psi_{i},
      \left(\1_{\H}-P_{2}\right)\psi_{j}}_{\H}\\
    & = \sum\dwn{i,j\in\rmK}\inner{\tilde{\psi}_{j},P_{1}\tilde{\psi}_{i}}_{\H}
    \inner{\tilde{\psi}_{i},\left(\1_{\H}-P_{2}\right)\tilde{\psi}_{j}}_{\H}\\
    & = \sum\dwn{i\in\rmK}\inner{\tilde{\psi}_{i},\tilde{\psi}_{j}}_{\H}\\
    & = \dim(P\dwnum\land P\dwndois\upperp).
  \end{align*}
\end{proof}
Observe that in order to avoid degeneracy of the sesquilinear form
$\inner{\,\cdot\,,\,\cdot\,}_{\sCAR^{*}}$, we can apply a standard procedure,
like that performed in \cite[Page 532]{lang12}, and turn it non--degenerate. For any $f\in\sCAR^{*}$
we define left and right kernels, respectively, by
\begin{align*}
  \calK\dwnrmL \doteq
  \conj{f\colon \forall g,\inner{f,g}_{\sCAR^{*}} = 0},\quad
  \calK\dwnrmR \doteq
  \conj{f\colon \forall g, \inner{g,f}_{\sCAR^{*}} = 0},
\end{align*}
which, by the construction of $\pint{\,\cdot\,}{\,\cdot\,}_{\sCAR^{*}}$,
are equal -- whence we call them both $\calK$. The set $\calK$ is a
subspace of $\sCAR\upast$, and we can define, for each $f\in\sCAR^{*}$, cosets of $\calK$:
\begin{align*}
  [f] \doteq f + \calK.
\end{align*}
We may then create the union of all cosets thus defined,
\begin{align*}
\frakF \doteq\bigcup\dwn{f\in\sCAR\upast}\conj{[f]},
\end{align*}
and introduce in such a union a vector space structure. The sesquilinear form
of equation \eqref{eq:cond1} yields, then, a non--degenerate form in $\frakF$,
unambiguously given by
\begin{align*}
  \pintcom{[f]}{[g]}{\frakF} \doteq\pint{f}{g}_{\sCAR^{*}}, \quad\text{for any}\quad f,g\in\sCAR^{*},
\end{align*}
and, hence, $\frakF$ is a pre--Hilbert space, the completion of which -- through
the norm induced by $\pintcom{\,\cdot\,}{\,\cdot\,}{\frakF}$, that is,
\begin{align*}
  \norma{[f]}\dwn{\frakF} \doteq
  \left(
  \sum\dwn{i,j\in\rmI}
  \left|
  f\left(\B(\psi_i)\B(\psi_j)\upast\right)
  \right|\updois
  \right)\upmeio
\end{align*}
-- will be denoted by $\hat{\frakF}$. We shall write, without confusion, $\inner{f,g}_{\frakF}$ and $\norma{f}\dwn{\frakF}$.
\section{Main results}
\begin{lemma}[$\Z\dwndois$--PI as a $\Z_{2}$--State Index]\label{lemma:states}
  Let $P_1,P_2\in\fp$, and let $\omega_{P_{1}},\omega_{P_{2}}\in\states$ be their
  respective quasi--free states. Then:
  \begin{enumerate}
  \item\label{enu1:proof1} $P_1-P_2\in\BL(\H)$ is a Hilbert--Schmidt operator if, and only if,
    \begin{align*}
      \norma{\omega\dwn{P\dwnum}-\omega\dwn{P\dwndois}}\dwn{\frakF} < \infty.
    \end{align*}
  \item\label{enu3:proof1} The $\Z_{2}$--PI given by \eqref{eq:top_index}
    can be rewritten in terms of the quantity \eqref{eq:number} as
    \begin{align*}
      \sigma(P_{1},P_{2}) = (-1)\up{\mathcal{N}(\omega_{P_{1}},\omega_{P_{2}})} \doteq
     \sigma(\omega_{P_{1}},\omega_{P_{2}}).
   \end{align*}
 \end{enumerate}
\end{lemma}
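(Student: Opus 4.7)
The plan is to treat the two items separately, with item~\ref{enu1:proof1} handled by a direct identification of $\|\omega\dwn{P\dwnum}-\omega\dwn{P\dwndois}\|\dwn{\frakF}$ with the Hilbert--Schmidt norm of $P\dwnum-P\dwndois$, and item~\ref{enu3:proof1} handled by invoking Lemma~\ref{lemma:number}.

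For item~\ref{enu1:proof1}, I would fix an orthonormal basis $\{\psi\dwni\colon i\in\I\}$ of $\H$ and unfold the $\frakF$--norm from its definition. Linearity of $\omega\dwn{P\dwnum}-\omega\dwn{P\dwndois}$ on each monomial together with the symbol relation \eqref{symbolbis2} gives
\begin{equation*}
  (\omega\dwn{P\dwnum}-\omega\dwn{P\dwndois})\bigl(\B(\psi\dwni)\B(\psi\dwnj)\upast\bigr)
  =\inner{\psi\dwni,(P\dwnum-P\dwndois)\psi\dwnj}\dwn{\H},
\end{equation*}
so that
\begin{equation*}
  \|\omega\dwn{P\dwnum}-\omega\dwn{P\dwndois}\|\dwn{\frakF}\updois
  =\sum\dwn{i,j\in\I}\bigl|\inner{\psi\dwni,(P\dwnum-P\dwndois)\psi\dwnj}\dwn{\H}\bigr|\updois,
\end{equation*}
which is the squared Hilbert--Schmidt norm of $P\dwnum-P\dwndois$. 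Finiteness of the left--hand side is therefore equivalent to $P\dwnum-P\dwndois$ being Hilbert--Schmidt, and the equivalence in item~\ref{enu1:proof1} follows. I should remark that, by the same basis--independence argument used in the proof of Lemma~\ref{lemma:number}, the computed value does not depend on the choice of $\{\psi\dwni\}$.

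For item~\ref{enu3:proof1}, I would note that once $P\dwnum-P\dwndois$ is Hilbert--Schmidt (the regime in which both sides of the claimed identity make intrinsic sense, by item~\ref{enu1:proof1} and \cite[Page 95]{A87}), Lemma~\ref{lemma:number} supplies
\begin{equation*}
  \mathcal{N}(\omega\dwn{P\dwnum},\omega\dwn{P\dwndois})=\dim(P\dwnum\land P\dwndois\upperp)\in\N\dwnzero.
\end{equation*}
Inserting this into the definition \eqref{eq:top_index} of $\sigma(P\dwnum,P\dwndois)$ yields
\begin{equation*}
  \sigma(P\dwnum,P\dwndois)=(-1)\up{\dim(P\dwnum\land P\dwndois\upperp)}
  =(-1)\up{\mathcal{N}(\omega\dwn{P\dwnum},\omega\dwn{P\dwndois})},
\end{equation*}
which is the desired reformulation.

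I do not expect any genuine obstacle here: the core work has already been carried out in Lemma~\ref{lemma:number}, and item~\ref{enu1:proof1} is essentially a dictionary between the pairing $\inner{\,\cdot\,,\,\cdot\,}\dwn{\sCAR\upast}$ and the trace pairing on $\BL(\H)$ via the symbol map $\omega\dwnP\mapsto P$. The only point requiring a little care is to confirm that the sums arising from $\|\omega\dwn{P\dwnum}-\omega\dwn{P\dwndois}\|\dwn{\frakF}$ and $\mathcal{N}(\omega\dwn{P\dwnum},\omega\dwn{P\dwndois})$ really are computed with a consistent orthonormal basis choice; this is granted by the basis--independence established in Lemma~\ref{lemma:number} and by the standard basis independence of the Hilbert--Schmidt norm.
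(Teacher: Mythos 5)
Your proposal is correct, and for item~\ref{enu1:proof1} it is actually more direct than the paper's own argument. You identify the functional difference with the operator difference at the level of matrix elements, $(\omega_{P_{1}}-\omega_{P_{2}})\bigl(\B(\psi_{i})\B(\psi_{j})^{*}\bigr)=\inner{\psi_{i},(P_{1}-P_{2})\psi_{j}}_{\H}$, and then Parseval immediately gives $\norma{\omega_{P_{1}}-\omega_{P_{2}}}_{\frakF}^{2}=\sum_{j}\norma{(P_{1}-P_{2})\psi_{j}}_{\H}^{2}=\norma{P_{1}-P_{2}}_{\mathrm{HS}}^{2}$, so the two finiteness conditions coincide as an exact equality of norms. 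The paper instead expands $\norma{P_{1}-P_{2}}_{\mathrm{HS}}^{2}$ in a basis adapted to the splitting $\frakh_{P_{1}}\oplus\frakh_{P_{1}^{\perp}}$, expands $\norma{\omega_{P_{1}}-\omega_{P_{2}}}_{\frakF}^{2}$ with a second basis adapted to $P_{2}$ (using the claimed freedom to take different bases in the $i$ and $j$ sums), and reduces both to the common quantity $\sum_{j}\inner{\tilde{\psi}_{j},(\1_{\H}-P_{2})\tilde{\psi}_{j}}_{\H}$, obtaining only the equivalence of finiteness rather than the identity of norms. Your route avoids the adapted-basis bookkeeping and the mixed-basis step entirely, at the modest price of noting (as you do) that basis independence is inherited from that of the Hilbert--Schmidt norm once the identification is made. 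For item~\ref{enu3:proof1} your argument is the same as the paper's: invoke Lemma~\ref{lemma:number} to replace $\dim(P_{1}\land P_{2}^{\perp})$ by $\mathcal{N}(\omega_{P_{1}},\omega_{P_{2}})$ in \eqref{eq:top_index}; your explicit remark that the identity is meaningful precisely in the Hilbert--Schmidt regime is a point the paper leaves implicit.
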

\begin{proof}
  We begin by proving \ref{enu1:proof1}. It is a simple result.
  First, recall that, for $\rmT\in\BL(\H)$, the Hilbert--Schmidt norm is given by
\begin{align*}
    \norma{\rmT}\dwnrmHS \doteq
    \left(
    \sum\dwn{i\in\rmI}
    \norma{\rmT\psi\dwni}\dwn{\H}\updois
    \right)^{\frac{1}{2}}.
  \end{align*}
Explicitly, for $P_{1}-P_{2}\in\BL(\H)$ we are able to write
\begin{align*}
    \norma{P\dwnum-P\dwndois}\dwnrmHS\updois & =
    \sum\dwn{i\in\rmI}
    \norma{(P\dwnum-P\dwndois)\psi\dwni}\dwn{\H}\updois\\
    & = \sum\dwn{i\in\rmI}\left(\inner{\psi_{i},P_{1}\psi_{i}}_{\H}
    +\inner{\psi_{i},P_{2}\psi_{i}}_{\H} -\inner{\psi_{i},P_{1}P_{2}\psi_{i}}_{\H} -
    \inner{\psi_{i},P_{2}P_{1}\psi_{i}}_{\H} \right).
  \end{align*}
Suppose, then, that $P\dwnum-P\dwndois\in\BL(\H)$ is Hilbert--Schmidt,
i.e., $\norma{P\dwnum-P\dwndois}\dwnrmHS\updois<\infty$.
Notice that under some calculations one can rewrite $\norma{P\dwnum-P\dwndois}\dwnrmHS\updois$ as
\begin{align*}
  \norma{P\dwnum-P\dwndois}\dwnrmHS\updois =\sum\dwn{i\in\rmI}
  \left(\pint{(\1_{\H}-P\dwndois)\psi\dwni}{P\dwnum\psi\dwni}_{\H}
  + \pint{(\1_{\H}-P\dwnum)\psi\dwni}{P\dwndois\psi\dwni}_{\H}\right).
\end{align*}
Since the Hilbert--Schmidt norm is basis independent,
one can choose for $\H$ the basis as in Expression \eqref{eq:basis_1},
which splits it into the direct sum $\frakh\dwn{P\dwnum}\oplus\frakh\dwn{P\upperp\dwnum}$,
with $\conj{\tilde{\psi}\dwnj\colon j\in\rmJ}$ being the basis for $\frakh\dwn{P\dwnum}$.
It follows that
\begin{align*}
\norma{P\dwnum-P\dwndois}\dwnrmHS\updois&=\sum\dwn{j\in\rmJ}
\pint{(\1_{\H}-P\dwndois)\tilde{\psi}\dwnj}{\tilde{\psi}\dwnj}_{\H}+\sum\dwn{j\in\rmJ}
\pint{\Gamma\tilde{\psi}\dwnj}{P\dwndois\Gamma\tilde{\psi}\dwnj}_{\H}\\ 
&= 2 \sum\dwn{j\in\rmJ}\pint{\tilde{\psi}\dwnj}{(\1_{\H}-P\dwndois)\tilde{\psi}\dwnj}_{\H},
\end{align*}
and one hence concludes that $P\dwnum-P\dwndois\in\BL(\H)$ is Hilbert--Schmidt if, and only if
\begin{align*}
\sum\dwn{j\in\rmJ}\pint{\tilde{\psi}\dwnj}{(\1_{\H}-P\dwndois)\tilde{\psi}\dwnj}_{\H}<\infty.
\end{align*}
Now, consider the quantity
\begin{align*}
  \norma{\omega\dwn{P\dwnum}-\omega\dwn{P\dwndois}}\dwn{\frakF}\updois =
  \sum\dwn{i,j\in\rmI}&
  \overline{\left(\omega\dwn{P\dwnum}(\B(\psi\dwni)\B(\psi\dwnj)\upast)
    -\omega\dwn{P\dwndois}(\B(\psi\dwni)\B(\psi\dwnj)\upast)\right)} \\
  &\hspace{3cm}\left(\omega\dwn{P\dwnum}(\B(\psi\dwni)\B(\psi\dwnj)\upast)
  -\omega\dwn{P\dwndois}(\B(\psi\dwni)\B(\psi\dwnj)\upast)\right),
\end{align*}
which, in face of Expression \eqref{symbolbis2}, is equivalent to
\begin{align*}
\norma{\omega\dwn{P\dwnum}-\omega\dwn{P\dwndois}}\dwn{\frakF}\updois & =
\sum\dwn{i,j\in\rmI}\left(\overline{\pint{\psi\dwni}{P\dwnum\psi\dwnj}_{\H}} - \overline{\pint{\psi\dwni}{P\dwndois\psi\dwnj}_{\H}}\right)
\left(\pint{\psi\dwni}{P\dwnum\psi\dwnj}_{\H} - \pint{\psi\dwni}{P\dwndois\psi\dwnj}_{\H}\right)\\
& = \sum\dwn{i,j\in\rmI}\left(\left|\pint{\psi\dwni}{P\dwnum\psi\dwnj}_{\H}\right|\updois + \left|\pint{\psi\dwni}{P\dwndois\psi\dwnj}_{\H}\right|\updois\right.\\
&\left.\hspace{2cm} -\pint{\psi\dwni}{P\dwndois\psi\dwnj}_{\H}\pint{\psi\dwnj}{P\dwnum\psi\dwni}_{\H}-\pint{\psi\dwni}{P\dwnum\psi\dwnj}_{\H}\pint{\psi\dwnj}{P\dwndois\psi\dwni}_{\H}\right).
\end{align*}
This quantity is basis independent, with the basis choices
for the sums over $i$ and $j$ not being necessarily equal.
Then, let, for the sum over $j$, the basis be given by Equation \eqref{eq:basis_1}.
Similarly, let the basis for sum over $i$ be that which splits $\H$ into the direct sum
$\frakh\dwn{P\dwndois}\oplus\frakh\dwn{P\upperp\dwndois}$ be denoted by
\begin{align*}
\conj{\breve{\psi}\dwnj \colon j\in\rmJ} \cup \conj{\Gamma\breve{\psi}\dwnj \colon j\in\rmJ},
\end{align*}
with $\conj{\breve{\psi}\dwnj \colon j\in\rmJ}$ being the basis for $\frakh\dwn{P\dwndois}$,
c.f., \eqref{eq:basis_1}. We note that, since $\frakh\dwn{P\dwnum}$ and $\frakh\dwn{P\dwndois}$
have the same dimension, we can choose the same index set $\rmJ$ for both their basis.
Straightforward calculations then show that
\begin{align*}
\norma{\omega\dwn{P\dwnum}-\omega\dwn{P\dwndois}}\dwn{\frakF}\updois =2\sum\dwn{i,j\in\rmJ}\left|\pint{\breve{\psi}\dwni}{\Gamma\tilde{\psi}\dwnj}_{\H}\right|\updois.
\end{align*}
Therefore, $\norma{\omega\dwn{P\dwnum}-\omega\dwn{P\dwndois}}\dwn{\frakF}\updois$ is finite if, and only if,
\begin{align*}
\sum\dwn{i,j\in\rmJ}\left|\pint{\breve{\psi}\dwni}{\Gamma\tilde{\psi}\dwnj}_{\H}\right|\updois <\infty.
\end{align*}
We nonetheless observe that the projection of $\Gamma\tilde{\psi}\dwnj$ onto $\frakh\dwn{P\dwndois}$ is given by
\begin{align*}
P\dwndois\Gamma\tilde{\psi}\dwnj =\sum\dwn{i\in\rmJ}\pint{\breve{\psi}\dwni}{\Gamma\tilde{\psi}\dwnj}_{\H}\breve{\psi}\dwni,
  \end{align*}
  so that
  \begin{align*}
    \pint{P\dwndois\Gamma\tilde{\psi}\dwnj}{P\dwndois\Gamma\tilde{\psi}\dwnj}_{\H} =
    \sum\dwn{i\in\rmJ}
    \left|\pint{\breve{\psi}\dwni}{\Gamma\tilde{\psi}\dwnj}_{\H}
    \right|\updois.
  \end{align*}
  But, since
  $\pint{P\dwndois\Gamma\tilde{\psi}\dwnj}{P\dwndois\Gamma\tilde{\psi}\dwnj}_{\H} =
  \pint{\tilde{\psi}\dwnj}{(\1_{\H}-P\dwndois)\tilde{\psi}\dwnj}_{\H}$,
  we have
  \begin{align*}
    \sum\dwn{j\in\rmJ}
    \pint{\tilde{\psi}\dwnj}{(\1_{\H}-P\dwndois)\tilde{\psi}\dwnj}_{\H} =
    \sum\dwn{i,j\in\rmJ}
    \left|
    \pint{\breve{\psi}\dwni}{\Gamma\tilde{\psi}\dwnj}_{\H}
    \right|\updois,
  \end{align*}
  in which case statement \ref{enu1:proof1} is proven.
  Part \ref{enu3:proof1} is clear from Lemma \ref{lemma:number}--quantity \eqref{eq:number} and Definition of the $\Z_{2}$--PI,
  given by \eqref{eq:top_index}. 
\end{proof}
In order to state our main Theorem, as well as its proof, some considerations are in order.
Firstly, for a basis projection $P\in\fp$, we explicitly construct the GNS representation
associated with its quasi--free state $\omega\dwnP$. This is a well--known result and we write
it for completeness. See, for instance, \cite[Chap. 6]{EK98}. As was discussed
for equation \eqref{symbolbis2} and comments around it, $\omega\dwnP$ is completely defined by
two--point correlations, satisfying \eqref{ass O0-00}--\eqref{Pfaffian}.
If we choose for $\H$ the basis given by \eqref{eq:basis_1}, namely,
\begin{align*}
  \conj{\tilde{\psi}\dwnj\colon j\in\rmJ} \cup \conj{\Gamma\tilde{\psi}\dwnj\colon j\in\rmJ},  
\end{align*}
with $P$ instead of $P\dwnum$, this yields a useful form for computations of the type $\omega_{P}(A)$,
for $A\in\sCAR$. Let $A\equiv\B(\psi_{1})\cdots\B(\psi_{2N})$, where, for $i\in\{1,\ldots, 2N\}$, $\psi_{i}$ is a
basis element of $\H$, with $N\in\N$. Observe that $\omega\dwnP(A)$ is given by
\begin{align*}
  \omega_{P}\left(\B\left(\psi_{1}\right) \cdots \B\left(\psi_{2N}\right) \right)=
  \mathrm{Pf}\left[\omega_{P}\left( \mathbb{O}_{k,l}\left( \B(\psi_{k}),\B(\psi_{l})\right)\right)\right]_{k,l=1}^{2N},  
\end{align*}
where, for $k,l\in\{1,\ldots,2N\}$, $\mathbb{O}_{k,l}$ is defined by \eqref{eq:o_skew}. Note that for $P\in\fp$, the $2N\times 2N$ matrix 
\begin{equation*}
\sfM_{k,l}^{2N}\doteq\left[\mathbb{O}_{k,l}(\inner{\psi_{k},P\Gamma\psi_{l}}_{\H})\right] _{k,l=1}^{2N}
\end{equation*}
is skew--symmetric and satisfies
\begin{equation*}
  \inner{\psi_{k},P\Gamma\psi_{l}}_{\H}=
  \inner{\psi_{l},\left(\1_{\H}-P\right)\Gamma\psi_{k}}_{\H},\qquad k,l\in \{1,\ldots ,2N\}.
\end{equation*}
Explicitly, the matrix is given by
{\small
  \begin{align*}
    \!\!\!\!
    \!\!\!\!
    \!\!\!\!
    \!\!\!\!
    \sfM =
      \left(
      \begin{array}{cccccc}
        0              & \Mcom{1}{2}_{\H}        & \cdots & \Mcom{1}{2N-1}_{\H}  & \Mcom{1}{2N}_{\H} \\
        -\Mcom{1}{2}_{\H}   & 0                  & \cdots & \Mcom{2}{2N-1}_{\H}  & \Mcom{2}{2N}_{\H} \\
        \vdots         & \vdots                  & \ddots & \vdots         & \vdots      \\
        -\Mcom{1}{2N-1}_{\H} & -\Mcom{2}{2N-1}_{\H}  & \cdots & 0              & \Mcom{2N-1}{2N}_{\H} \\
        -\Mcom{1}{2N}_{\H}   & -\Mcom{2}{2N}_{\H}      & \cdots & -\Mcom{2N-1}{2N}_{\H} & 0 
      \end{array}
      \right).
  \end{align*}
}\par

One notices, for example, that, whenever $\psi_{2N}$ belongs to $\h\dwnP$,
the Pfaffian under consideration is zero. It is likewise zero whenever $\psi\dwnum$
belongs to $\h\dwn{\P\upperp}$. This implies that, whenever $\B(\psi_{i})$ is present in $A$,
a non-zero result requires $\B(\psi_{i})^{*}$ present to its right, and reciprocally.
Moreover, Definition \ref{Pfaffian} for the Pfaffian relies on sums of products of $N$ factors.
Therefore, there must be at least $N$ non-zero entries in $\sfM$. This only happens if,
apart from anticommutation, $A$ is of the form
$\prod\limits_{i=1}^{N}\B\left(\psi_{i}\right)\B\left(\psi_{i}\right)^{*}$,
with each $\psi_{i}$ an element of the chosen basis for $\h_{P}$.

All of the considerations of the previous paragraph allow us to
conclude, for any basis projection $P\in\fp$ with associated
quasi--free state $\omega_{P}\in\states$, that the ideal $\frakI_{\omega_{P}}$
of the GNS construction%
\footnote{We use here Bratteli--Robinson notation. See \cite[Pages 54--56]{BratteliRobinsonI}.}
associated with $\omega_{P}$, to wit,
\begin{align*}
  \frakI_{\omega_{P}} \doteq\conj{A \colon A\in\sCAR,\,\omega_{P}(A^{*}A) = 0}
\end{align*}
is the set of all elements of the $\sCAR$ which are \emph{not} of the form%
\footnote{We disregard counting elements like $\B(\varphi\dwnum)\B(\varphi\dwnum)^{*}\B(\varphi\dwndois)^{*}$, etc, with a $\B(\varphi\dwni)$ element to left of a $\B(\varphi\dwni)^{*}$, and which are not in $\frakI\dwnomega$, since, for instance, 
\begin{align*}
  (\B(\varphi\dwnum)\B(\varphi\dwnum)^{*}\B(\varphi\dwndois)^{*})^{*}
  \B(\varphi\dwnum)\B(\varphi\dwnum)^{*}\B(\varphi\dwndois)^{*} =
  \B(\varphi\dwndois)\B(\varphi\dwnum)\B(\varphi\dwnum)^{*}
  \B(\varphi\dwnum)\B(\varphi\dwnum)^{*}\B(\varphi\dwndois)^{*},
\end{align*}
and, because of the $\CAR$,
\begin{align*}
  \B(\varphi\dwnum)\B(\varphi\dwnum)^{*}\B(\varphi\dwnum)\B(\varphi\dwnum)^{*} =
  \B(\varphi\dwnum)\B(\varphi\dwnum)^{*}.
\end{align*}
}
\begin{align*}
  \B(\psi\dwnum)^{*}\cdots\B(\psi_{N})^{*}, \quad \psi\dwnum,\,\dots,\,\psi_{N} \in \h\dwnP
  ,
\end{align*}
for any $N\in\N$. For all $A\in\sCAR$, we can then construct its GNS class by
\begin{align*}
  \Psi\dwn{A} \doteq A + \frakI_{\omega_{P}},
\end{align*}
so that $\H_{\omega_{P}}$ is the completion of the vector space
$\conj{\Psi\dwn{A} \colon A\in\sCAR}$, seen as a pre--Hilbert space with inner product
given by $\inner{A,B}_{\omega_{P}}\doteq \omega_{P}(A^{*}B)$, for any $A,B\in\sCAR$.\par
The above explicit construction allows us to therefore show that $\pi_{\omega_{P}}$ and $\pi\dwnP$
are unitarily equivalent. In fact, consider the fermionic Fock space and its associated vacuum vector,
given by \eqref{eq:ferm_fock} and discussed in comments around it. For all $N\in\N$,
and all elements $\psi_{1},\ldots,\psi_{N}$ of $\h\dwnP$, consider the function given by
\begin{align*}
  \Psi\dwn{\B(\psi\dwnum)^{*}\cdots\B(\psi\dwn{N})^{*}}
  \mapsto
  \psi\dwnum\land\dots\wedge\psi\dwn{N} , \quad
  \Psi\dwn{\frakuno} (\doteq \Omega_{\omega_{P}})
  \mapsto
  \Omega,
\end{align*}
which is extended by linearity for all $\Psi_{A}$, with $A\in\sCAR$. It is clear that this
function is bounded and defined over a dense subset of $\H_{\omega_{P}}$. It therefore has a
bounded extension $U\in\BL(\H_{\omega_{P}};\calF(\h_{P}))$. Note that $U$ is unitary
and that, for all $A\in\sCAR$,
\begin{align}\label{eq:equivomegaP}
  \pi_{\omega_{P}}(A) = U^{*}\pi\dwnP(A)U,
\end{align}
whence the equivalence. This equivalence allows us to go a bit further,
and establish two other equivalences. When we consider the spaces
\begin{align*}
  \pi_{\omega_{P}}(\sCAR\uppm)\Omega_{\omega_{P}}
  ,
\end{align*}
we note that%
\footnote{
\label{footnote:elementsevenodd}
One sees that zero belongs to $\pi_{\omega_{P}}(\sCAR\upmais)\Omega_{\omega_{P}}$, and that its non--zero vectors
are of the form $\Psi\dwn{\B(\psi\dwnum)^{*}\cdots\B(\psi\dwn{N})^{*}}$, with $N$ even, or $\psi\dwn{\frakuno}$. On the other hand, one sees that zero belongs to $\pi\dwnP(\sCAR\upmais)\Omega\dwnP$, and that its non-zero vectors are of the form
$\psi\dwnum\land\dots\wedge\psi\dwn{N}$, with $N$ even, or $\Omega$.
Analogous reasoning goes for the spaces with the ``$-$'' sign. 
}
\begin{align*}
  U\left(\pi_{\omega_{P}}(\sCAR\uppm)\Omega_{\omega_{P}}\right) =
  \pi\dwnP(\sCAR\uppm)\Omega,
\end{align*}
from what we may assert that 
\begin{align*}
  U(\H_{\omega_{P}}^{\pm}) = \H\dwnP\uppm,
\end{align*}
where $\H_{\omega_{P}}\upmais$ is the closure of the space $\pi_{\omega_{P}}(\sCAR\upmais)\Omega_{\omega_{P}}$,
and $\H_{\omega_{P}}\upmenonos$ is the closure of the space $\pi_{\omega_{P}}(\sCAR\upmenonos)\Omega_{\omega_{P}}$.
Moreover, it is of notice that $\H\upmais\dwnP$ and $\H\upmenonos\dwnP$ consist,
respectively, of the vector subsets of even and odd elements of the Fock space,
which share only in common the null vector. Therefore, the fermionic Fock space splits
into the direct sum: $\calF(\h_{P})=\H\dwnP\upmais\oplus\H\dwnP\upmenonos$.
Similary, we can write $\H_{\omega_{P}}=\H_{\omega_{P}}^{+}\oplus\H_{\omega_{P}}^{-}$, in such a way that
\begin{align*}
  U = U_{+}\oplus U_{-},
\end{align*}
with $U_{\pm}\in\BL(\H_{\omega_{P}}^{\pm};\H_{P}^{\pm})$ being a unitary operator.
This allows to conclude that, if $\pi_{\omega_{P}}$
is the restriction of $\pi_{\omega_{P}}(\sCAR\upmais)$ to $\H_{\omega_{P}}^{+}$, and $\pi_{\omega_{P}}\upmenonos$
is the restriction $\pi_{\omega_{P}}(\sCAR\upmais)$ to $\H_{\omega_{P}}^{-}$, then%
\footnote{
Following footnote \ref{footnote:elementsevenodd}, notice that the even elements
from $\pi_{\omega_{P}}(\sCAR\upmais)$ (or $\pi_{P}(\sCAR\upmais)$) leave $\H_{\omega_{P}}^{\pm}$
(respectively $\H\dwnP\uppm$) invariant, for even elements of $\sCAR$ do not alter
parity.
}
\begin{align*}
  \pi_{\omega_{P}}^{+} = U_{+}^{*}\pi_{P}^{+}U_{+}, \quad
  \pi_{\omega_{P}}^{-} = U_{-}^{*}\pi_{P}^{-}U_{-},
\end{align*}
from what we conclude that $\pi_{P}^{\pm}$ and $\pi_{\omega_{P}}^{\pm}$ are
unitarily equivalent. With all the foregoing considerations, the following Theorem is easily proven:
\begin{theorem}\label{theorem:main}
Let $P_1,P_2\in\fp$ be two basis projections, and let $\omega_{P_{1}},\omega_{P_{2}}\in\states$ be the quasi--free states associated with $P_1$ and $P_2$, respectively. Let $\pi_{\omega_{P_{1}}}$ and $\pi_{\omega_{P_{2}}}$ be the GNS representations on the $\sCAR$ algebra associated with the states $\omega_{P_{1}}$ and $\omega_{P_{2}}$. Then:
\begin{enumerate}
\item\label{theorem:1} \emph{Shale--Stinespring}: $\pi_{\omega_{P_{1}}}$ and $\pi_{\omega_{P_{2}}}$ are unitarily equivalent if, and only if, $\left\Vert\omega_{P_{1}}-\omega_{P_{2}}\right\Vert_{\frakF}<\infty$;
\item\label{theorem:2} For $i\in\{1,2\}$, let $\pi\dwn{\omega\dwn{P\dwni}}\uppm$ be representations defined above and consider the \emph{$\Z_{2}$--state index}, $\sigma(\omega_{P_{1}},\omega_{P_{2}})$, given in Lemma \ref{lemma:states}. Then:
\begin{enumerate}
\item The representations $\pi\uppm\dwn{\omega\dwn{P\dwnum}}$ and $\pi\uppm\dwn{\omega\dwn{P\dwndois}}$ are irreducible;
\item The representations $\pi_{\omega_{P_{1}}}^{\pm}$ and $\pi_{\omega_{P_{2}}}^{\pm}$ are unitarily equivalent if, and only if, $\sigma(\omega_{P_{1}},\omega_{P_{2}}) = 1$ and $\left\Vert\omega_{P_{1}}-\omega_{P_{2}}\right\Vert_{\frakF}<\infty$;
\item The representations $\pi_{\omega_{P_{1}}}^{\pm}$ and $\pi_{\omega_{P_{2}}}^{\mp}$ are unitarily equivalent if, and only if, $\sigma(\omega_{P_{1}},\omega_{P_{2}}) = -1$ and $\left\Vert\omega_{P_{1}}-\omega_{P_{2}}\right\Vert_{\frakF}<\infty$.
\end{enumerate}
\end{enumerate}
\end{theorem}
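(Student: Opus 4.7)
The plan is to reduce the three statements to Araki's original theorems for the Fock representations \cite[Theorems 6.14 and 6.15]{A87} by using the unitary equivalences $\pi_{\omega_{P_{i}}}\cong\pi_{P_{i}}$ and $\pi_{\omega_{P_{i}}}^{\pm}\cong\pi_{P_{i}}^{\pm}$ that have been explicitly constructed in the paragraphs immediately preceding the theorem, and then to translate the hypotheses from the language of basis projections into the language of states via Lemma~\ref{lemma:states}.

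For item \ref{theorem:1}, I will argue as follows. By Equation~\eqref{eq:equivomegaP}, for $i\in\{1,2\}$ there exists a unitary $U_{i}\in\BL(\H_{\omega_{P_{i}}};\calF(\h_{P_{i}}))$ with $\pi_{\omega_{P_{i}}}(A)=U_{i}^{*}\pi_{P_{i}}(A)U_{i}$ for every $A\in\sCAR$. From this, unitary equivalence of $\pi_{\omega_{P_{1}}}$ and $\pi_{\omega_{P_{2}}}$ is equivalent to that of $\pi_{P_{1}}$ and $\pi_{P_{2}}$: if $V$ intertwines the GNS representations, then $U_{2}VU_{1}^{*}$ intertwines the Fock ones, and conversely. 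Classical Shale--Stinespring gives that the latter holds iff $P_{1}-P_{2}$ is Hilbert--Schmidt, which by Lemma~\ref{lemma:states}\ref{enu1:proof1} is exactly the condition $\|\omega_{P_{1}}-\omega_{P_{2}}\|_{\frakF}<\infty$.

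For item \ref{theorem:2}, the decomposition $U_{i}=U_{i,+}\oplus U_{i,-}$ established before the theorem, with $U_{i,\pm}\in\BL(\H_{\omega_{P_{i}}}^{\pm};\H_{P_{i}}^{\pm})$ unitary, intertwines $\pi_{\omega_{P_{i}}}^{\pm}$ with $\pi_{P_{i}}^{\pm}$. Irreducibility in~(a) therefore follows from Araki's irreducibility of $\pi_{P_{i}}^{\pm}$. For (b) and (c), the same intertwining argument as in item \ref{theorem:1} reduces unitary equivalence of the $\pm$ pieces of the GNS representations to unitary equivalence of the corresponding $\pm$ pieces of the Fock representations. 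By Araki's \cite[Theorem 6.15]{A87}, $\pi_{P_{1}}^{\pm}\cong\pi_{P_{2}}^{\pm}$ iff $P_{1}-P_{2}$ is Hilbert--Schmidt and $\sigma(P_{1},P_{2})=+1$, while $\pi_{P_{1}}^{\pm}\cong\pi_{P_{2}}^{\mp}$ iff $P_{1}-P_{2}$ is Hilbert--Schmidt and $\sigma(P_{1},P_{2})=-1$. Lemma~\ref{lemma:states}\ref{enu1:proof1} rewrites the first condition as $\|\omega_{P_{1}}-\omega_{P_{2}}\|_{\frakF}<\infty$, and Lemma~\ref{lemma:states}\ref{enu3:proof1} identifies $\sigma(P_{1},P_{2})$ with $\sigma(\omega_{P_{1}},\omega_{P_{2}})$, yielding (b) and (c).

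There is no genuine obstacle in this proof since the hard analytic content is contained in Araki's theorems and in Lemma~\ref{lemma:states}. The only point requiring care is checking that the unitaries $U_{i}$ transport the even/odd subspace decomposition correctly, so that one really may pass between $\pi_{\omega_{P_{i}}}^{\pm}$ and $\pi_{P_{i}}^{\pm}$; this is precisely what the discussion around footnote~\ref{footnote:elementsevenodd} and the splitting $U=U_{+}\oplus U_{-}$ verifies. Once this compatibility is in place, each statement of the theorem becomes a one-line translation of Araki's corresponding statement.
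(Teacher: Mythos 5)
Your proposal is correct and follows essentially the same route as the paper: transporting the equivalence question through the explicit unitaries $U_{i}$ (and their even/odd splittings $U_{i,\pm}$) to reduce everything to Araki's Theorems 6.14 and 6.15 for the Fock representations, and then translating the Hilbert--Schmidt and index conditions into the state language via Lemma~\ref{lemma:states}. No gaps; the only cosmetic difference is that you cite Lemma~\ref{lemma:states} directly where the paper's text nominally points to Lemma~\ref{lemma:number}, which is if anything the cleaner reference.
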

\begin{proof}
\ref{theorem:1}. For $i\in\{1,2\}$, $U_{i}\in\BL(\H_{\omega_{P_{i}}};\calF(\h_{P_{i}}))$ satisfying relation \eqref{eq:equivomegaP}, suppose $\pi\dwnomegaPum$ and $\pi\dwnomegaPdois$ are unitarily equivalent, namely, there is $U\in\BL(\H_{\omega_{P_{1}}};\H_{\omega_{P_{2}}})$ such that for any $A$ in $\sCAR$,
\begin{align}\label{eq:omegaequiv}
\pi\dwnomegaPum(A) = U^{*} \pi\dwnomegaPdois(A) U.
\end{align}
Let $\tilde{U}\doteq U_{2}UU_{1}^{*}\in\BL(\calF(\h_{P_{1}});\calF(\h_{P_{2}}))$. Then, for all $A\in\sCAR$,
\begin{align*}
\tilde{U}^{*}\pi_{P_{2}}(A)\tilde{U} & = U_{1}U^{*}U_{2}^{*}\pi\dwnPdois(A)U_{2}UU_{1}^{*}\\
& = U_{1}U^{*}\pi\dwnomegaPdois(A)UU_{1}^{*}\\
& = U_{1}\pi_{\omega_{P_{1}}}(A)U_{1}^{*}\\
& = \pi_{P_{1}}(A),
\end{align*}
so that $\pi\dwnPum$ and $\pi\dwnPdois$ are equivalent. It follows by \cite[Theo. 6.14]{A87} that $P\dwnum-P\dwndois\in\BL(\H)$ is a Hilbert--Schmidt operator, and, from Lemma \ref{lemma:number}, it follows that $\norma{\omega\dwnPum-\omega\dwnPdois}\dwnfrakF<\infty$.

For the converse, if $\norma{\omega\dwnPum-\omega\dwnPdois}\dwnfrakF<\infty$, then,
by Lemma \ref{lemma:number}, $P\dwnum-P\dwndois\in\BL(\H)$ is Hilbert--Schmidt, and,
by \cite[Theo. 6.14]{A87}, $\pi\dwnPum$ and $\pi\dwnPdois$ are unitarily equivalent. Let $\tilde{U}$ be the unitary which implements this equivalence, that is, for all $A\in\sCAR$,
\begin{align*}
\pi\dwnPum(A) = \tilde{U}^{*} \pi\dwnPdois(A) \tilde{U}.
\end{align*}
Then, for $U=U_{2}^{*}\tilde{U}U_{1}$, calculations similar to the foregone show that $\tilde{U}$ establishes the equivalence between $\pi\dwnomegaPum$ and $\pi\dwnomegaPdois$, like that given by \eqref{eq:omegaequiv}. Item \ref{theorem:1} is thus proven.

\ref{theorem:2} (a) By the above discussion, for $i\in\{1,2\}$, $\pi\dwn{\omega{\dwn{P\dwni}}}\uppm$, is unitarily equivalent to $\pi\dwn{P\dwni}\uppm$, and each of the latter is, by \cite[Theor. 6.15]{A87}, an irreducible representation. \ref{theorem:2} (b) Follows from an argument similar to that used for Item \ref{theorem:1}. In case $\pi\dwnomegaPum\upmais$ and $\pi\dwnomegaPdois\upmais$ are unitarily equivalent, we can show, by an appropriate choice of unitary operator, that $\pi\dwnPum\upmais$ and $\pi\dwnPdois\upmais$ are unitarily equivalent. This uses the fact that $\pi\dwnomegaPum\upmais$ and $\pi\dwnPum\upmais$ are unitarily equivalent, and so are $\pi\dwnomegaPdois\upmais$ and $\pi\dwnPdois\upmais$. From this, it follows, by \cite[Theor. 6.15 (2)]{A87}, that $P\dwnum-P\dwndois$ is Hilbert--Schmidt, and that $\sigma(P\dwnum,P\dwndois)=1$. By Lemma \ref{lemma:number}, this implies that $\norma{\omega\dwnPum-\omega\dwnPdois}\dwnfrakF<\infty$, and that $\sigma(\omega\dwnPum,\omega\dwnPdois)=1$. For the converse, if $\norma{\omega\dwnPum-\omega\dwnPdois}\dwnfrakF<\infty$ and $\sigma(\omega\dwnPum,\omega\dwnPdois)=1$, Lemma \ref{lemma:number} and \cite[Theo. 6.15 (2)]{A87} allow us to conclude that $\pi\dwnPum\upmais$ and $\pi\dwnPdois\upmais$ are equivalent, from what follows, by an appropriate choice of unitary transformation, that $\pi\dwnomegaPum\upmais$ and $\pi\dwnomegaPdois\upmais$ are unitarily equivalent. \ref{theorem:2} (c) is proven in the exact same way as \ref{theorem:2} (b).
\end{proof}

\bibliography{books}
\bibliographystyle{amsalpha}

\end{document}